\documentclass[11pt]{article}

\usepackage{fullpage,amsthm,amsmath,algorithm,algorithmic,multirow,tabularx,paralist,color}
\usepackage[colorlinks=true,pdfpagemode=none,linkcolor=blue,citecolor=blue,pdfstartview=FitH]{hyperref}


\newcommand{\INPUT}{\item[{\bf Input:}]}
\newcommand{\OUTPUT}{\item[{\bf Output:}]}


\newtheorem{theorem}{Theorem}[section]
\newtheorem{lemma}[theorem]{Lemma}

\newtheorem{corollary}[theorem]{Corollary}

\newtheorem{observation}[theorem]{Observation}
\newtheorem{clm}[theorem]{Claim}

\newcommand{\E}[1]{{\bf{E}}\left[#1\right]}
\newcommand{\EE}[2]{{\bf{E}}_{#1}\left[#2\right]}
\renewcommand{\P}[1]{{\bf{P}}\left[#1\right]}
\newcommand{\PP}[2]{{\bf{P}}_{#1}\left[#2\right]}
\renewenvironment{proof}{\noindent{\em Proof.}~}{$\hfill \qed$\\}
\newenvironment{proofof}[1]{{\noindent \em Proof of #1.  }}{\hfill\qed}

\def\b1{{\bf 1}}

\def\ZZ{{\mathbb Z}}
\def\cM{{\cal M}}
\def\cI{{\cal I}}

\def\cE{{\cal E}}
\def\cP{{\cal P}}

\def\eps {\epsilon}
\def\max{{\rm{max}}}

\def \opt{{\rm{OPT}}}
\def \alg{{\rm{ALG}}}
\def \weight{\omega}
\def \L{L}
\begin{document}

\title{On Variants of the Matroid Secretary Problem}
\author{
Shayan Oveis Gharan\thanks{Stanford University, Stanford, CA; {\tt shayan@stanford.edu};
this work was done while the author was at IBM Almaden Research Center, San Jose, CA.}
\and Jan Vondr\'ak\thanks{IBM Almaden Research Center, San Jose, CA; {\tt jvondrak@us.ibm.com}}
}

\date{}

\maketitle
\begin{abstract}
We present a number of positive and negative results for variants of the matroid secretary problem.
Most notably, we design a constant-factor competitive algorithm for the ``random assignment'' model where the weights are assigned randomly to the elements of a matroid, and then the elements arrive on-line in an adversarial order (extending a result of Soto \cite{Soto11}).
This is under the assumption that the matroid is known in advance. If the matroid is unknown in advance, we present
an $O(\log r \log n)$-approximation, and prove that a better than $O(\log n / \log \log n)$ approximation is impossible. This resolves
an open question posed by Babaioff et al. \cite{BIK07}. 

As a natural special case, we also consider the classical secretary problem where the number of candidates $n$ is unknown in advance.
If $n$ is chosen by an adversary from $\{1,\ldots,N\}$, we provide a nearly tight answer, by providing an algorithm that chooses the best candidate with probability at least $1/(H_{N-1}+1)$ and prove that a probability better than $1/H_N$
cannot be achieved (where $H_N$ is the $N$-th harmonic number).
\end{abstract}

\section{Introduction}

The secretary problem is a classical problem in probability theory, with obscure origins in the 1950's and early 60's (\cite{Gardner60,Lindley61,Dynkin63}; see also \cite{Ferguson89}).
The goal in this problem is to select the best candidate out of a sequence revealed one-by-one, where the ranking is uniformly random.
A classical solution finds the best candidate with probability at least $1/e$ \cite{Ferguson89}.
Over the years a number of variants have been studied, starting with \cite{GM66} where multiple choices and various measures of success were considered for the first time.

Recent interest in variants of the secretary problem has been motivated by applications in on-line mechanism design \cite{HKP04,Kleinberg05,BIK07},
where items are being sold to agents arriving on-line, and there are certain constraints on which agents can be simultaneously satisfied.
Equivalently, one can consider a setting where we want to hire several candidates under certain constraints.
Babaioff, Immorlica and Kleinberg \cite{BIK07} formalized the {\em matroid secretary problem} and presented constant-factor competitive algorithms for several interesting cases. The general problem formulated in \cite{BIK07} is the following.

\paragraph{Matroid secretary problem.}
Given a matroid $\cM = (E, \cI)$ with non-negative weights assigned to $E$;
the only information known up-front is the number of elements $n:=|E|$.
The elements of $E$ arrive in a random order, with their weights revealed as they arrive.
When an element arrives, it can be selected or rejected.
The selected elements must always form an independent set in $\cM$, and
a rejected element cannot be considered again.
The goal is to maximize the expected weight of the selected elements.

\


Additional variants of the matroid secretary problem have been proposed and studied, depending on how the input ordering is generated,
how the weights are assigned and what is known in advance. In all variants, elements with their weights arrive in an on-line fashion
and an algorithm must decide irrevocably whether to accept or reject an element once it has arrived.
We attempt to bring some order to the multitude of models and
we classify the various proposed variants as follows.

\medskip

{\em Ordering of matroid elements on the input:}
\begin{compactitem}
\item AO = Adversarial Order: the ordering of elements of the matroid on the input is chosen by an adversary.
\item RO = Random Order: the elements of the matroid arrive in a random order.
\end{compactitem}

\medskip

{\em Assignment of weights:}
\begin{compactitem}
\item AA = Adversarial Assignment: weights are assigned to elements of the matroid by an adversary.
\item RA = Random Assignment: the weights are assigned to elements by a random permutation of an adversarial set of weights
(independent of the input order, if that is also random).
\end{compactitem}

\medskip

{\em Prior information:}
\begin{compactitem}
\item MK = Matroid Known: the matroid is known beforehand (by means of an independence oracle).
\item MN = Matroid - $n$ known: the matroid is unknown but the cardinality of the ground set is known
beforehand.
\item MU = Matroid - Unknown: nothing about the matroid is known in advance; only subsets of the elements
that arrived already can be queried for independence.
\end{compactitem}

\medskip

For example, the original variant of the matroid secretary problem \cite{BIK07},
where the only information known beforehand is the total number of elements, can be described as RO-AA-MN in this classification.
We view this as the primary variant of the matroid secretary problem.

We also consider variants of the classical secretary problem; here, only 1 element should be chosen
and the goal is to maximize the probability of  selecting the best element.

\medskip

{\em Classical secretary problems:}
\begin{compactitem}
\item CK = Classical - Known $n$: the classical secretary problem where the number of elements in known in advance.
\item CN = Classical - known upper bound $N$: the classical secretary problem where the number of elements
is chosen adversarially from $\{1,\ldots,N\}$, and $N$ is known in advance.
\item CU = Classical - Unknown $n$: the classical secretary problem where no information on the number of elements
is known in advance.
\end{compactitem}

\medskip

\noindent
Since the independent sets of the underlying matroid in this model are independent of the particular labeling of the ground set (i.e., RO-AA-CK, AO-RA-CK and RA-RO-CK models are equivalent), we just use the weight assignment function to characterize different variants of this model.
The classical variant of the secretary problem which allows a $1/e$-approximation would be described as RA-CK.
The variant where the number of elements $n$ is not known in advance is very natural --- and has been considered
under different stochastic models where $n$ is drawn from a particular distribution \cite{Stewart80,ABT82} ---
but the worst-case scenario does not seem to have received attention. We denote this model RA-CU,
or RA-CN if an upper bound on the number of candidates is given.
In the model where the input ordering of weights is adversarial (AA-CK), it is easy to see that no algorithm
achieves probability better than $1/n$ \cite{BS09}.
We remark that variants of the secretary problem with other objective functions have been also proposed,
such as discounted profits \cite{BDGIT09}, and submodular objective functions \cite{BHZ10,GRST10}.
We do not discuss these variants here.

\subsection{Recent related work}


The primary variant of matroid secretary problem (RO-AA-MN model)
was introduced in \cite{BIK07}.
In the following, let $n$ denote the total number of elements and $r$ the rank of the matroid.
An $O(\log r)$-approximation for the RO-AA-MN model was given in \cite{BIK07}.
It was also conjectured that a constant-factor approximation should exist for this problem and this question is still open.
Very recently, Chakraborty and Lachish \cite{CL12} improved  \cite{BIK07} by giving an $O(\sqrt{\log r})$-approximation algorithm.
Constant-factor approximations were given in \cite{BIK07}
for some special cases such as partition matroids and graphic matroids with a given explicit representation.
Further, constant-factor approximations were given for transversal matroids \cite{DP08,KP09}
and laminar matroids \cite{IW11}.
However, even for graphic matroids in the RO-AA-MK model when the graphic matroid is given by an oracle,
no constant factor is known.

Babaioff et al. in \cite{BIK07} also posed as an open problem whether there is a constant-factor approximation algorithm for the following two models:
Assume that a set of $n$ numerical values are assigned to the matroid elements using a random one-to-one correspondence but that the elements are presented in an adversarial order (AO-RA in our notation).
Or, assume that both the assignment of values and the ordering
of the elements in the input are random (RO-RA in our notation).
The issue of whether the matroid is known beforehand is left somewhat ambiguous in \cite{BIK07}.

In a recent work \cite{Soto11}, Jos\'e Soto partially answered the second question, by designing a constant-factor approximation algorithm in the RO-RA-MK model: An adversary chooses a list of non-negative weights,
which are then assigned to the elements using a random permutation, which is independent of the random order
at which the elements are revealed. The matroid is known in advance here.


\subsection{Our results}

\paragraph{Matroid secretary.}
We resolve the question from \cite{BIK07} concerning adversarial order and random assignment, by providing a constant-factor approximation algorithm in the AO-RA-MK model, and showing that no constant-factor approximation exists
in the AO-RA-MN model.
More precisely, we prove that there is a $40/(1-1/e)$-approximation in the AO-RA-MK model,
i.e. in the model where weights are assigned to the elements of a matroid randomly,
the elements arrive in an adversarial order, and the matroid is known in advance.
We provide a simple thresholding algorithm, which gives a constant-factor approximation  for the AO-RA-MK model when the matroid $\cM$ is uniformly dense. Then we use the principal sequence of a matroid to design a constant-factor approximation for any matroid using the machinery developed by Soto \cite{Soto11}.
(Subsequently to our work, Soto \cite{SotoPhD} improved our approximation factor in the AO-RA-MK model to
$16/(1-1/e)$.)

On the other hand, if the matroid is  not known in advance  (AO-RA-MN model),
we prove that the problem cannot be approximated better than within $\Omega(\log n / \log \log n)$.
This holds even in the special case of rank 1 matroids; see below.
On the positive side, we show an $O(\log r \log n)$-approximation
for this model. 
We achieve this by providing
an $O(\log{r})$-approximation thresholding algorithm for the AO-AA-MU model (when both the input ordering and the assignment of weights to the elements the matroid are adversarial), when an estimate on the weight of the largest non-loop element is given. Here, the novel technique is to employ a dynamic threshold depending on the rank of the elements seen so far.

\paragraph{Classical secretary with unknown $n$.}
A very natural question that arises in this context is the following.
Consider the classical secretary problem, where we want to select $1$ candidate out of $n$.
The classical solution relies on the fact that $n$ is known in advance. However, what if we do not know $n$
in advance, which would be the case in many practical situations?
We show that if an upper bound $N$ on the possible number of candidates $n$ is given
(RA-CN model: i.e., $n$ is chosen by an adversary from $\{1,\ldots,N\}$),
the best candidate can be found with probability $1/(H_{N-1}+1)$,
while there is no algorithm which achieves probability better than $1/H_N$
(where $H_N = \sum_{i=1}^{N} \frac{1}{i}$ is the $N$-th harmonic number).

In the model where we maximize the expected value of the selected candidate, and $n$ is chosen adversarially from $\{1,\ldots,N\}$,
we prove we cannot achieve approximation better than $\Omega(\log N / \log \log N)$.
On the positive side, even if no upper bound on $n$ is given, 
the maximum-weight element can be found with probability $\epsilon / \log^{1+\epsilon} n$ for any fixed $\epsilon>0$.
We remark that similar results follow from \cite{HKS07} and \cite{FHKMS10} where an equivalent problem was considered in the context of online auctions.
More generally, for the matroid secretary problem where no information at all is given in advance (RO-AA-MU),
we achieve an $O(\frac{1}{\epsilon} \log r \log^{1+\epsilon} n)$ approximation for any $\epsilon>0$.
See Table \ref{tab:results} for an overview of our results.

\begin{table}[htb]
\centering
\begin{tabular}{|| c || c | c ||} \hline
\mbox{Problem}
& \mbox{New approximation}
 & \mbox{New hardness}
  \\
\hline \hline
RA-CN &  $H_{N-1}+1$ & $H_N$  \\
\hline
RA-CU &  $O(\frac{1}{\eps} \log^{1+\eps} n)$ & $\Omega(\log n)$  \\
\hline
AO-RA-MK &  $40/(1-1/e)$ & -  \\
\hline
AO-RA-MN &  $O(\log{r} \log{n})$  & $\Omega(\log n / \log\log{n})$  \\
\hline
AO-RA-MU &  $O(\frac{1}{\eps} \log{r} \log^{1+\eps}{n})$  & $\Omega(\log n / \log\log{n})$  \\
\hline
RO-AA-MU &  $O(\frac{1}{\eps} \log{r} \log^{1+\eps}{n})$ &  $\Omega(\log n / \log\log{n})$ \\
\hline
\end{tabular}
\caption{Summary of results}
\label{tab:results}
\end{table}

\paragraph{Organization.}
In section \ref{sec:aoramk} we provide a $40/(1-1/e)$ approximation algorithm for the AO-RA-MK
model. In section \ref{sec:ao-ra-mn} we provide an $O(\log{n}\log{r})$ approximation algorithm for the AO-RA-MN model, and
an $O(\frac{1}{\eps} \log{r} \log^{1+\eps}{n})$ approximation for the RO-AA-MU model. Finally, in section \ref{sec:ra-cn}
we provide a $(H_{N-1}+1)$-approximation and $H_N$-hardness for the RA-CN model.

%
%

\section{Approximation for adversarial order and random assignment}
\label{sec:aoramk}
In this section, we derive a constant-factor approximation algorithm for the AO-RA-MK model,
i.e.  assuming that the ordering of the elements of the matroid is adversarial but weights are assigned
to the elements by a random permutation, and the matroid is known in advance.
We build on Soto's algorithm \cite{Soto11}, in particular on his use of the {\em principal sequence
of a matroid} which effectively reduces the problem to the case of a uniformly-dense matroid while
losing only a constant factor ($1-1/e$). Interestingly, his reduction only requires the randomness in the assignment
of weights to the elements but not a random ordering of the matroid on the input. 
Hence, it is sufficient to obtain a constant factor for uniformly dense matroids. 
Recall that the density of a set in a matroid $\cM=(E,\cI)$ is the quantity
$
\gamma(S) = \frac{|S|}{rank(S)}.
$
A matroid is uniformly dense, if $\gamma(S) \leq \gamma(E)$ for all $S \subseteq E$.
We present a simple thresholding algorithm which works in the AO-RA-MK model (i.e. even for an adversarial ordering
of the elements) for any uniformly dense matroid.
Combining our algorithm with SotoÕs reduction \cite[Lemma 4.4]{Soto11}, we obtain a constant-factor approximation algorithm for the matroid secretary problem in AO-RA-MK model.


Throughout this section we use the following notation.
Let $\cM=(E,I)$ be a uniformly dense matroid of rank $r$. This also means that $\cM$ contains no loops.
Let $|E|=n$ and let $e_1, e_2, \ldots, e_{n}$ denote the ordering of the
elements on the input, which is chosen by an adversary (i.e.~we consider the worst case).
Furthermore, the adversary also chooses $W=\{w_1 > w_2 > \ldots > w_{n}\}$, a set of non-negative weights.
The weights are assigned to the elements of $\cM$ via a random bijection $\weight: E \rightarrow W$.
For a weight assignment $\weight$,
we denote by $w(S) = \sum_{e \in S} \omega(e)$ the weight of a set $S$, and by $\weight(S)  = \{\omega(e): e \in S\}$
the set of weights assigned to $S$.
We also let $\opt(\weight)$ be the maximum-weight independent set in $\cM$.

\subsection{Approximation for uniformly dense matroids}
\label{subsec:uniformdense_aoramk}

We show that there is  a simple thresholding algorithm which includes each of the topmost $\lfloor r/4\rfloor$ weights
(i.e. $w_1,\ldots, w_{\lfloor r/4\rfloor}$) with a constant probability.
This will give us a constant factor approximation algorithm, as $w(\opt(\weight)) \leq \sum_{i=1}^r w_i$, where $w_1>w_2>\ldots>w_r$ are the $r$ largest weights in $W$. It is actually important that we compare our algorithm to the quantity $\sum_{i=1}^{r} w_i$,
because this is needed in the reduction to the uniformly dense case.

The main idea is that the randomization of the weight assignment makes it very likely that the optimum solution contains
many of the top weights in $W$. Therefore, instead of trying to compute the optimal solution with respect to $\omega$,
we can just focus on catching a constant fraction of the top weights in $W$.
Let $A=\{e_1,\ldots,e_{n/2}\}$ denote the first half of the input and $B=\{e_{n/2+1},\ldots,e_n\}$ the second half of the input.
Note that the partition into $A$ and $B$ is determined by the adversary and not random.
Our solution is to use the $\lfloor r/4 \rfloor+1$-st
topmost weight in the "sampling stage" $A$ as a threshold and then include every element in $B$ that is above the threshold
and independent of the previously selected elements.
Details are described in Algorithm \ref{alg:threshold}.

\begin{algorithm}[htb]
\caption{Thresholding algorithm for uniformly dense matroids in AO-RA-MK model}
\label{alg:threshold}
\begin{algorithmic}[1]
\INPUT A uniformly dense matroid $\cM=(E,\cI)$ of rank $r$.
\OUTPUT An independent set $\alg \subseteq E$.
\IF {$r<12$}
\STATE run the optimal algorithm for the classical secretary problem, and return the resulting singleton. \label{alg:largest_element}
\ENDIF
\STATE $\alg \leftarrow \emptyset$
\STATE Observe a half of the input (elements of $A$) and let $w^*$ be the $(\lfloor r/4\rfloor + 1)^{st}$ largest weight among them.\label{alg:threshold:sample}
\FOR {each element $e\in B$ arriving afterwards}
\IF {$\weight(e) > w^*$ and $ALG \cup \{e\}$ is independent}\label{alg:threshold:decide}
\STATE $\alg \leftarrow \alg \cup \{e\}$
\ENDIF
\ENDFOR
\RETURN \alg
\end{algorithmic}
\end{algorithm}

\begin{theorem}
\label{thm:AO-RA-KM}
Let $\cM$ be a uniformly dense matroid of rank $r$, and $\alg(\weight)$ be the set returned by Algorithm \ref{alg:threshold}
when the weights are defined by a uniformly random bijection $\weight: E \rightarrow W$. Then
$$
\EE{\weight}{w(\alg(\weight))} \geq 
\frac{1}{40} \sum_{i=1}^{r} w_i
$$
where $\{w_1>w_2>\ldots>w_r\}$ are the $r$ largest weights in $W$.
\end{theorem}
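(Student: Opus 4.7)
The strategy is to show that each of the top $k := \lfloor r/4 \rfloor$ weights is selected by the algorithm with constant probability. Let $X_j$ denote the (random) element that carries weight $w_j$ under the random bijection $\weight$. Then $w(\alg(\weight)) = \sum_j w_j \b1[X_j \in \alg]$, and linearity of expectation gives
\begin{equation*}
\EE{\weight}{w(\alg(\weight))} \;\geq\; \min_{1 \leq j \leq k} \P{X_j \in \alg} \cdot \sum_{j=1}^{k} w_j \;\geq\; \min_{1 \leq j \leq k} \P{X_j \in \alg} \cdot \frac{k}{r} \sum_{j=1}^{r} w_j,
\end{equation*}
using that the average of the top $k$ weights is no less than the average of the top $r$. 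Since $k/r \geq 1/5$ for $r \geq 12$ (and the case $r < 12$ is handled by the classical branch of the algorithm), establishing $\P{X_j \in \alg} \geq 1/8$ for every $j \leq k$ yields the claimed $1/40$ factor.

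Fix $i \in \{1,\ldots,k\}$. I factor the event $\{X_i \in \alg\}$ as the conjunction of (a) $X_i \in B$, (b) $\weight(X_i) > w^*$, and (c) $X_i$ is not spanned by the previously accepted elements when it arrives. Event (a) has probability $|B|/n \geq 1/2$ by the random assignment. Event (b) follows automatically from (a) for $i \leq k$: conditioning on $X_i \in B$ places $w_i$ into $W_B$, so any weight of $W_A$ exceeding $w_i$ must lie in $\{w_1,\ldots,w_{i-1}\}$, a set of size at most $k-1 < k+1$; hence the $(k+1)$-th largest weight of $W_A$, namely $w^*$, is strictly less than $w_i$.

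The heart of the proof is event (c). Let $U = \{e \in B : \weight(e) > w^*\}$ and $t = |U|$; since the algorithm greedily accepts the elements of $U$ in arrival order, $\alg$ is a maximal independent subset of $U$ and $|\alg| = \mathrm{rank}(U)$. Condition on $W_A$ (which determines $w^*$ and $t$) and on $X_i \in B$. Because the assignment of $W_B$ to $B$ is a uniform bijection, $U$ is distributed as a uniformly random $t$-subset of $B$ and, inside $U$, the element $X_i$ is uniformly distributed. By this symmetry,
\begin{equation*}
\P{X_i \in \alg \mid W_A,\, X_i \in B} \;=\; \EE{U}{\mathrm{rank}(U)/t},
\end{equation*}
where $U$ ranges over uniform $t$-subsets of $B$.

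The main obstacle is to lower-bound $\EE{U}{\mathrm{rank}(U)}$ using only the hypothesis that $\cM$ is uniformly dense on $E$; the subtlety is that uniform density does not pass to the restriction of $\cM$ to $B$. I handle this by building $U$ one element at a time: if the current partial sample $S$ has rank $\rho$, then uniform density of $\cM$ on $E$ gives $|\mathrm{span}(S)| \leq \rho \cdot n/r$, so the next element drawn uniformly from $B \setminus S$ falls outside $\mathrm{span}(S)$ with probability at least $(|B| - \rho n/r)/|B| \geq 1 - 2\rho/r$, using $|B| \geq n/2$. Letting $g(i) = \E{\mathrm{rank}(S_i)}$ gives the recurrence $g(i+1) \geq (1 - 2/r) g(i) + 1$, whose solution is $g(t) \geq (r/2)(1 - (1 - 2/r)^t)$. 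Finally, the distribution of $t$ is concentrated around $k$ (the position of the $(k+1)$-th $W_A$-weight has expectation $\approx 2(k+1)$, so $t = O(k)$ with constant probability by a Markov estimate); combined with the rank bound and the fact that $g(t)/t$ stays bounded below by a constant over that range, this yields $\P{X_i \in \alg \mid X_i \in B} \geq 1/4$, and multiplying by $\P{X_i \in B} \geq 1/2$ completes the proof.
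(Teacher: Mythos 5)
Your proof takes a genuinely different route from the paper's, although the high-level skeleton (show each top-$\lfloor r/4\rfloor$ weight lands in $\alg$ with probability at least $1/8$, then average) is the same. The paper conditions on the set $C'_i$ of elements receiving the top $t=2\lfloor r/4\rfloor+2$ weights other than $w_i$: this set has odd size $t-1$, so by symmetry $|C'_i\cap B|\le\lfloor r/4\rfloor$ with probability exactly $1/2$, which forces $w^*$ to be one of the top $t$ weights and hence $\alg\subseteq B'$. Then uniform density is used exactly once, to bound $|\mathrm{span}(C'_i\cap B)|\le n/4$, and the only remaining randomness is the placement of $w_i$ among the other $n-t+1$ elements, giving a clean $\ge 1/4$. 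Your proof instead conditions on $W_A$, identifies $\P{X_i\in\alg\mid W_A,X_i\in B}=\E{\mathrm{rank}(U)/|U|}$ for a uniformly random subset $U\subseteq B$, and lower-bounds $\E{\mathrm{rank}(U)}$ via an incremental sampling recurrence that applies uniform density at every step. Both exploit the same structural fact (that $\alg$ is a maximal independent subset of the above-threshold elements of $B$), but the paper's conditioning makes the argument a one-shot span bound, while yours needs a recurrence and a tail argument on $|U|$.

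Your argument is morally correct but the last step is only gestured at, and the constants there are the delicate part. Specifically, you assert that concentration of $t=|U|$ plus the bound $g(t)\ge\frac{r}{2}(1-(1-2/r)^t)$ gives $\E{g(t)/t}\ge 1/4$, but you do not carry this out. What is actually needed is (i) an exact bound $\E{t\mid X_i\in B}\le k+1$ for the negative-hypergeometric rank of the $(k+1)$-st order statistic of $W_A$, (ii) a Markov step such as $\P{t\le 4(k+1)}\ge 3/4$, and (iii) the numerical evaluation $h(4(k+1))=\frac{r}{8(k+1)}(1-(1-2/r)^{4(k+1)})$, which for $r\ge 12$ sits near $0.35$, making the product about $0.26$ --- above $1/4$, but only barely, so the inequality cannot be waved off. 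Also note that the step "$t=O(k)$ with constant probability by a Markov estimate'' is stated in the unconditional distribution of $W_A$, whereas the relevant quantity is the conditional distribution given $w_i\in W_B$; the two are close but should not be conflated without comment. None of this is a fatal flaw --- the approach does go through --- but as written the key quantitative claim ($\ge 1/4$) is unproved, and the paper's argument avoids exactly this fiddly computation by its choice of what to condition on.
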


If $r<12$, the algorithm finds and returns the largest weight $w_1$ with probability $1/e$ (step \ref{alg:largest_element};
the optimal algorithm for the classical secretary problem).
Therefore, for $r<12$, we have $\EE{\weight}{w(\alg(\weight))} \geq \frac{1}{11e} \sum_{i=1}^{r} w_i
 > \frac{1}{40} \sum_{i=1}^{r} w_i$.

For $r \geq 12$, we prove that each of the topmost $\lfloor r/4 \rfloor$ weights will be included in $\alg(\weight)$
with probability at least $1/8$.  Hence, we will obtain
\begin{equation}
\label{eq:larger_algopt_comparison}
\EE{\weight}{w(\alg(\weight))} \geq \frac{1}{8}  \sum_{i=1}^{\lfloor r/4\rfloor} w_i \geq
\frac{1}{40} \sum_{i=1}^{r} w_i.
\end{equation}

Let $t = 2 \lfloor r/4 \rfloor + 2$. Define $C'(\weight) = \{e_j: \weight(e_j) \geq w_t \}$ to be the set of elements of $\cM$
which get one of the top $t$ weights. Also let $A'(\weight) = C'(\weight)\cap A$ and $B'(\weight) = C'(\weight) \cap B$.
Moreover, for each $1 \leq i \leq t$ we define $C'_i(\weight) = \{e_j: \weight(e_j) \geq w_t \ \& \ \weight(e_j) \neq w_i \}$,
$A'_i(\weight) = C'_i(\weight)\cap A$ and $B'_i(\weight) = C'_i(\weight) \cap B$, i.e.~the same sets with the element of weight
$w_i$ removed.

First, we fix $i \leq \lfloor r/4 \rfloor$ and argue that the size of $B'_i(\weight)$ is smaller than $A'_i(\weight)$ with probability $1/2$.
Then we will use the uniformly dense property of $\cM$ to show that the span of $B'_i(\weight)$ is also quite small
with probability $1/2$ and consequently $w_i$ has a good chance of being included in $\alg(\weight)$.

\begin{clm}
\label{cl:Bstar_size}
Let $\cM$ be a uniformly dense matroid of rank $r$, $t = 2 \lfloor r/4 \rfloor + 2$, $1 \leq i \leq \lfloor r/4 \rfloor$,
and $B'_i(\weight)$ defined as above. Then we have
\begin{equation}
\label{eq:Bstar_size}
\PP{\weight}{|B'_i(\weight) | \leq \lfloor r/4 \rfloor} = 1/2.
\end{equation}
\end{clm}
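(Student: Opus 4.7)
The plan is to exploit the symmetry between $A$ and $B$ (which have equal sizes under the standard convention that the input is split into halves). The key observation is a parity argument: $|C'_i(\omega)| = t-1 = 2\lfloor r/4\rfloor + 1$ is odd, so $|A'_i(\omega)|$ and $|B'_i(\omega)|$ can never coincide.

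First I would unpack the definitions: since $\omega$ is a uniformly random bijection $E \to W$, the set of elements receiving the $t-1$ weights in $\{w_1,\ldots,w_t\}\setminus\{w_i\}$ is a uniformly random $(t-1)$-subset of $E$. Because $|A|=|B|$, the swap that exchanges $A$ with $B$ is a measure-preserving involution on the space of bijections $\omega$, and under this involution $|A'_i(\omega)|$ is mapped to $|B'_i(\omega)|$. Hence $|A'_i|$ and $|B'_i|$ have the same marginal distribution.

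Next I would combine this symmetry with the parity observation. Since $|A'_i(\omega)| + |B'_i(\omega)| = t-1$ is odd, the event $|A'_i| = |B'_i|$ has probability $0$, so by symmetry
\[
\PP{\omega}{|B'_i(\omega)| < |A'_i(\omega)|} \;=\; \PP{\omega}{|A'_i(\omega)| < |B'_i(\omega)|} \;=\; \tfrac{1}{2}.
\]
Finally I would check that the event $\{|B'_i(\omega)| < |A'_i(\omega)|\}$ coincides with the event $\{|B'_i(\omega)| \leq \lfloor r/4\rfloor\}$: strict inequality combined with $|A'_i|+|B'_i| = 2\lfloor r/4\rfloor + 1$ forces $|B'_i| \leq \lfloor r/4\rfloor$, and conversely $|B'_i| \leq \lfloor r/4\rfloor$ forces $|A'_i| \geq \lfloor r/4\rfloor + 1 > |B'_i|$.

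There is no real obstacle here; the only subtlety is the implicit assumption that $|A|=|B|$ (i.e.\ that $n$ is even), which is what makes the $A\!\leftrightarrow\!B$ swap measure-preserving. If $n$ were odd one would pick up a negligible correction, but for the claim as stated the symmetry and parity arguments combine to give exactly $1/2$.
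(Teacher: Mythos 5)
Your proof is correct and follows essentially the same route as the paper's: both arguments hinge on the observation that $C'_i(\omega)$ is a uniformly random subset of odd size $t-1 = 2\lfloor r/4\rfloor + 1$, so by the $A\leftrightarrow B$ symmetry (with $|A|=|B|$) the probability that at most $\lfloor r/4\rfloor$ of these elements land in $B$ is exactly $1/2$. You spell out the measure-preserving involution and the parity step a bit more explicitly, and you flag the implicit $n$-even assumption, but the substance is the same.
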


\begin{proof}
Consider $C'_i(\weight)$, the set of elements receiving the top $t$ weights except for $w_i$.
This is a uniformly random set of odd size $t-1 = 2 \lfloor r/4 \rfloor + 1$.
By symmetry, with probability exactly $1/2$, a majority of these elements are in $A$,
and hence at most $\lfloor r/4 \rfloor$ of these elements are in $B$,
i.e. $|B'_i(\weight)| \leq \lfloor r/4 \rfloor$.
\end{proof}

Now we consider the element receiving weight $w_i$.
We claim that this element will be included in $\alg(\weight)$ with a constant probability.

\begin{clm}
\label{cl:constant_prob}
Let $\cM$ be a uniformly dense matroid of rank $r$, and $i \leq \lfloor r/4 \rfloor$. Then
$$
\PP{\weight}{\weight^{-1}(w_i) \in \alg(\weight)} \geq 1/8.
$$
\end{clm}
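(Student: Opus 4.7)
The plan is to show that whenever three conditions hold simultaneously --- (i) $e := \weight^{-1}(w_i)$ lies in $B$, (ii) $|B'_i(\weight)| \leq \lfloor r/4 \rfloor$, and (iii) $e \notin \mathrm{span}(B'_i(\weight))$ --- the algorithm selects $e$. Claim~\ref{cl:Bstar_size} already controls (ii), so the main task is to lower-bound, conditional on any realization of $C'_i$ with $|B'_i| \leq \lfloor r/4 \rfloor$, the conditional probability that (i) and (iii) both hold. If this conditional probability is at least $1/4$, then $\PP{\weight}{e \in \alg(\weight)} \geq \frac{1}{2} \cdot \frac{1}{4} = \frac{1}{8}$.

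First I verify that (i)--(iii) force $e$ into $\alg(\weight)$. Since $i \leq \lfloor r/4 \rfloor$, there are only $i - 1 \leq \lfloor r/4 \rfloor - 1$ weights in $W$ exceeding $w_i$, and hence at most that many can land in $A$; therefore the $(\lfloor r/4 \rfloor + 1)$-st largest weight in $A$, namely $w^*$, is strictly less than $w_i$, and $e$ passes the threshold. On the other hand, (ii) forces $|A'_i| = (t-1) - |B'_i| \geq \lfloor r/4 \rfloor + 1$, so $A$ has at least $\lfloor r/4 \rfloor + 1$ elements of weight $\geq w_t$, yielding $w^* \geq w_t$. Consequently any element of $B \setminus \{e\}$ with weight $> w^*$ has weight in $\{w_1,\ldots,w_{t-1}\} \setminus \{w_i\}$ and therefore lies in $B'_i$; the set of elements the algorithm has already selected by the time $e$ arrives is thus a subset of $B'_i$, and (iii) guarantees that $e$ is independent of this set, so $e$ is added to $\alg$.

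To lower-bound the joint probability, fix any $C'_i$ with $|B'_i| \leq \lfloor r/4 \rfloor$. By uniform density of $\cM$,
\[
|\mathrm{span}(B'_i)| \;\leq\; \mathrm{rank}(B'_i) \cdot \frac{n}{r} \;\leq\; \lfloor r/4 \rfloor \cdot \frac{n}{r} \;\leq\; \frac{n}{4}.
\]
Since $B'_i = C'_i \cap B \subseteq \mathrm{span}(B'_i)$, we get $B \setminus (C'_i \cup \mathrm{span}(B'_i)) = B \setminus \mathrm{span}(B'_i)$, and this set has size at least $n/2 - n/4 = n/4$. Given $C'_i$, the element $e$ is uniformly distributed over $E \setminus C'_i$, a set of size $n - t + 1 \leq n$, hence
\[
\PP{\weight}{e \in B \setminus \mathrm{span}(B'_i) \;\big|\; C'_i} \;\geq\; \frac{n/4}{n - t + 1} \;\geq\; \frac{1}{4}.
\]
Averaging over $C'_i$ satisfying the size condition and multiplying by $\P{|B'_i| \leq \lfloor r/4 \rfloor} = 1/2$ from Claim~\ref{cl:Bstar_size} yields the desired bound.

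The delicate point I want to flag is the two-sided squeeze on $w^*$: one simultaneously needs $w^* < w_i$ (so that $e$ is above the threshold) and $w^* \geq w_t$ (so that every $B$-element beating the threshold is already contained in $B'_i$, allowing uniform density to control the span). Both conclusions rely on $e \in B$ together with the size bound $|B'_i| \leq \lfloor r/4 \rfloor$, and this is precisely why the parameter $t = 2 \lfloor r/4 \rfloor + 2$ is chosen.
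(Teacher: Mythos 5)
Your proof is correct and follows essentially the same route as the paper: condition on $C'_i$ having few elements in $B$ (via Claim~\ref{cl:Bstar_size}), show the threshold $w^*$ is squeezed into $[w_t, w_i)$, use uniform density to bound $|\mathrm{span}(B'_i)|$ by $n/4$, and bound the conditional probability of landing in $B \setminus \mathrm{span}(B'_i)$ by $1/4$. Your explicit justification of the two-sided squeeze on $w^*$ (the paper dispatches the upper bound $w^* < w_i$ with a single clause) is a welcome clarification but not a different method.
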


\begin{proof}
Condition on $C'_i(\weight) = S$ for some particular set $S$ of size $t-1$ such that $|B'_i(\weight)| =
 |S \cap B| \leq \lfloor r/4 \rfloor$.
This fixes the assignment of the top $t$ weights except for $w_i$.
Under this conditioning, weight $w_i$ is still assigned uniformly to one of the remaining $n-t+1$ elements.

Since we have $|A'_i(\weight)| = |S \cap A| \geq \lfloor r/4 \rfloor + 1$, the threshold $w^*$ in this case
is one of the top $t$ weights and the algorithm will never include any weight outside of the top $t$.
Therefore, we have $\alg(\weight) \subseteq B'(\weight)$.
The weight $w_i$ is certainly above $w^*$ because it is one of the top $\lfloor r/4 \rfloor$ weights.
It will be added to $\alg(\weight)$ whenever it appears in $B$ and
it is not in the span of previously selected elements. Since all the previously included elements must be in $B'_i(\weight) = S \cap B$,
it is sufficient to avoid being in the span of $S \cap B$. To summarize, we have
\begin{eqnarray}
\weight^{-1}(w_i) \in B \setminus span(S \cap B)
 & \Rightarrow & \weight^{-1}(w_i) \in \alg(\weight). \nonumber
\end{eqnarray}
What is the probability that this happens?
Similar to the proof of \cite[Lemma 3.1]{Soto11}, since $\cM$ is uniformly dense, we have
$$
\frac{|span(S \cap B)|}{|S \cap B|} \leq \frac{|span(S \cap B)|}{rank(span(S \cap B))} \leq \frac{n}{r}
 \Longrightarrow |span(S \cap B)| \leq \frac{n}{r} |S \cap B| \leq \frac{n}{4}
$$
using $|S \cap B| \leq \lfloor r/4 \rfloor$. Therefore, there are at least $n/4$ elements in $B \setminus span(S \cap B)$.
Given that the weight $w_i$ is assigned uniformly at random among $n-t$ possible elements,
we get
$$ \PP{\weight}{\weight^{-1}(w_i) \in B \setminus span(S \cap B) \mid C'_i(\weight) = S} \geq \frac{n/4}{n-t} \geq \frac{1}{4}.$$
Since this holds for any $S$ such that $|S \cap B| \leq \lfloor r/4 \rfloor$, and $S \cap B = C'_i \cap B = B'_i(\weight)$,
it also holds that
$$ \PP{\weight}{\weight^{-1}(w_i) \in B \setminus span(B'_i(\weight)) \mid |B'_i(\weight)| \leq \lfloor r/4 \rfloor} \geq \frac{1}{4}. $$
Using Claim~\ref{cl:Bstar_size}, we get $\PP{\weight}{\weight^{-1}(w_i) \in B \setminus span(B'_i(\weight))} \geq 1/8$.
\end{proof}

This finishes the proof of Theorem~\ref{thm:AO-RA-KM}.


\subsection{Extension  to general matroids}
\label{app:principal_minor}
In this section we describe the final $40/(1-1/e)$ approximation algorithm for AO-RA-MK model for general matroids. The algorithm is based on Soto's algorithm \cite{Soto11}, by decomposing the underlying matroid into a sequence of principal minors and then running Algorithm \ref{alg:threshold} in parallel on each of them separately.

\begin{algorithm}[htb]
\caption{Thresholding algorithm for matroid secretary problem in AO-RA-MK model}
\label{alg:general_aoramk}
\begin{algorithmic}[1]
\INPUT A matroid $\cM=(E,\cI)$.
\OUTPUT An independent set $\alg \subseteq E$.
\STATE Compute the sequence of principal minors $(\cM_i)_{i=1}^k$:
Initialize $k=0$. While $\bigcup_{i=1}^{k} E_i \neq E$, let $E_{k+1}$ be the densest set in the matroid $\cM / \bigcup_{i=1}^{k} E_i$,
define $\cM_{k+1} = (\cM / \bigcup_{i=1}^{k} E_i) | E_{k+1}$, and increment $k$.
\STATE Run Algorithm \ref{alg:threshold} in parallel on each $\cM_i$ to get a solution $I_i$, and return $ALG = \bigcup_{i=1}^{k} I_i$.
\end{algorithmic}
\end{algorithm}

We use Soto's lemma to argue that if the weights are assigned randomly to the elements, and we achieve
an $\alpha$-fraction of the sum of the $r_i$ topmost weights in each principal minor $\cM_i$,
then we obtain an $\alpha/(1-1/e)$ approximation overall.

Interestingly, it is necessary to know the matroid in advance, in order to discriminate the  dense parts of the matroid from the sparse parts (by computing the principal minors), and try to handle them separately. Otherwise,  as we prove later,
no algorithm can do better than an $O(\log n/ \log\log n)$ approximation.


\begin{corollary}
Algorithm \ref{alg:general_aoramk} gives a $\frac{40}{1-1/e}$-approximation in the AO-RA-MK model.
\end{corollary}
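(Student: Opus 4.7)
The plan is to combine two ingredients already available to us: Theorem~\ref{thm:AO-RA-KM}, which handles uniformly dense matroids, and Soto's reduction lemma \cite[Lemma~4.4]{Soto11}, which tells us that a guarantee on the top weights in each principal minor lifts to a $1/(1-1/e)$-loss guarantee on the whole matroid. The structural fact we need is that each $\cM_i$ in the principal sequence is uniformly dense (this is the defining property of principal minors, obtained by successively extracting densest subsets), and that the $E_i$ partition $E$ with the property that a union of sets independent in each $\cM_i$ is independent in $\cM$.

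First, I would verify that running Algorithm~\ref{alg:threshold} in parallel on the $\cM_i$ is well-defined in the AO-RA-MK model: the matroid is known in advance, so the decomposition into principal minors can be computed off-line before any element arrives. When an element $e$ arrives, it belongs to exactly one $E_i$, so only the corresponding instance of Algorithm~\ref{alg:threshold} acts on it. The output $\alg=\bigcup_i I_i$ is independent in $\cM$ because each $I_i$ is independent in $\cM_i=(\cM/\bigcup_{j<i} E_j)|E_i$, and independence lifts through the contraction/restriction.

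Next, I would apply Theorem~\ref{thm:AO-RA-KM} to each $\cM_i$. The key observation is that conditioning on the multiset of weights assigned to $E_i$, the bijection between these weights and elements of $E_i$ is uniformly random. Hence Theorem~\ref{thm:AO-RA-KM} applies conditionally and gives
\[
\EE{\weight}{w(I_i)\mid \weight(E_i)} \geq \frac{1}{40}\sum_{j=1}^{r_i} w_j^{(i)},
\]
where $w_1^{(i)}>w_2^{(i)}>\ldots$ are the weights assigned to $E_i$ sorted in decreasing order and $r_i=\mathrm{rank}(\cM_i)$. Taking expectation over the assignment $\weight(E_i)$, we obtain an unconditional bound of the form $\E{w(I_i)}\geq \frac{1}{40}\E{\sum_{j=1}^{r_i} w_j^{(i)}}$.

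Finally I would invoke Soto's Lemma~4.4 from \cite{Soto11}: summing the guarantees over the principal minors and applying his bound,
\[
\E{w(\alg)} = \sum_{i=1}^k \E{w(I_i)} \geq \frac{1}{40}\sum_{i=1}^k \E{\sum_{j=1}^{r_i} w_j^{(i)}} \geq \frac{1-1/e}{40}\cdot w(\opt(\weight)),
\]
which gives the $40/(1-1/e)$ approximation. The main obstacle I anticipate is the careful bookkeeping to ensure that the conditioning argument in the middle step is correct, i.e., that once we fix which weights land in $E_i$, their internal assignment is still uniform. This follows from the fact that the global bijection $\weight$ is uniformly random, so conditioning only on the partition of weights into blocks leaves the within-block assignment uniform. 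Everything else is a direct invocation of the two cited results.
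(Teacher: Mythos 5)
Your proposal is correct and follows essentially the same route as the paper: condition on the multiset of weights landing in each block $E_i$, apply Theorem~\ref{thm:AO-RA-KM} to each uniformly dense $\cM_i$ to get a $1/40$ fraction of the top $r_i$ weights per block, and finish with Soto's Lemma~4.4. The only cosmetic difference is that the paper makes the intermediate quantity $\sum_i \E{\sum_{j\le r_i} w_j^{(i)}}$ explicit as $\EE{\weight}{\opt_{\cP}}$ for the partition matroid $\cP$, which is exactly the object to which Soto's lemma is stated to apply.
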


\begin{proof}
Similar to the proof of Theorem \ref{thm:AO-RA-KM}, let $e_1,\ldots,e_n$ be the sequence of elements of $M$ designed by the adversary and $W=w_1 > \ldots > w_n$ be the hidden list of weights.
Let $\cM_i$, $1\leq i\leq k$, be the sequence of principal minors of $\cM$, with ground set $E_i$ and rank $r_i$,
and let $\cP$ denote a partition matroid as defined in \cite[Section 4]{Soto11}, with ground set $E$ and independent sets
$$
\cI(\cP) = \left\{ \bigcup_{i=1}^k I_i: I_i \subseteq E_i,|I_i|\leq r_i \right\}.
$$
For a uniformly random bijection $\weight:E\rightarrow W$, let $\opt_{\cP}(\weight)$ be the maximum weight of an independent
set in matroid $\cP$, and $\alg(\weight)$ be the set returned by Algorithm \ref{alg:general_aoramk}. Conditioning on the set of weights assigned to the elements of each block $E_i$, the elements in $E_i$ receive a random permutation of this set of weights.
Since each  $\cM_i$ is uniformly dense, By  Theorem \ref{thm:AO-RA-KM}, Algorithm \ref{alg:general_aoramk} recovers in expectation
a $1/40$-fraction of the sum of the heaviest $r_i$ weights assigned to elements in $E_i$.
However, the union of the heaviest $r_i$ elements in each $E_i$ is indeed the optimum solution in the partition matroid $\cP$.
By removing the conditioning we get
\begin{equation}
\label{eq:comparing_partition}
\EE{\weight}{w(\alg(\weight))} \geq \frac{1}{40} \EE{\weight}{\opt_{\cP}(\weight)}.
\end{equation}
Moreover, Soto in \cite{Soto11} proved that  $\EE{\weight}{\opt_{\cP}}$ is only a constant factor away from the optimum of $\EE{\weight}{\opt_{\cM}}$.
\begin{lemma}[Soto \cite{Soto11}]
\label{lem:sotoprincipal_minor}
$\EE{\weight}{w(\opt_{\cP})(\weight))} \geq (1-1/e) \EE{\weight}{w(\opt_{\cM}(\weight))}.$
\end{lemma}
This proves the corollary.
\end{proof}


\section{Approximation algorithms for unknown matroids}
\label{sec:ao-ra-mn}
In this section we will be focusing mainly on the AO-RA-MN model.
 i.e. assuming that the ordering of the elements of the matroid is adversarial, weights are assigned randomly, but the matroid is unknown, and the algorithm only knows $n$ in advance. We present an $O(\log{n}\log{r})$ approximation algorithm for the AO-RA-MN model,
where $n$ is the number of elements in the ground set and $r$ is the rank of the matroid.
It is worth noting that in these models the adversary may set some of the elements of the matroid to be loops, and the algorithm does
not know the number of loops in advance. For example it might be the case that after observing the first 10 elements, the rest are all loops and thus the algorithm should select at least one of the first 10 elements with some non-zero probability. This is the idea of the counterexample
in section \ref{sec:ra-cn} (Corollary \ref{cor:hardness_aoramn}), where we reduce AO-RA-MN, AO-RA-MU models to RA-CN, RA-CU models respectively, and thus we show that there is no constant-factor approximation for either of the  models. In fact, no algorithm can do better than $\Omega(\log{n}/\log\log{n})$. Therefore, our algorithms are tight within a factor of $O(\log{r}\log\log{n})$ or $O(\log{r}\log^{\eps}{n})$.

  At the end of this section we  also give a general framework that can turn any $\alpha$ approximation algorithm for the RO-AA-MN model, (i.e. the primary variant of the matroid secretary problem) into an $O(\alpha \log^{1+\eps}{n}/\eps)$ approximation algorithm in the RO-AA-MU model (see subsection \ref{sec:RO-AA-MU}).

We use the same notation as section \ref{sec:aoramk}: $\cM=(E,I)$ is a  matroid of rank $r$ (which is not known to the algorithm), and $e_1,e_2,\ldots,e_n$ is the the adversarial ordering of the elements of $\cM$, and $W=\{w_1 > w_2 > \ldots >w_n\}$ is the set of hidden weights chosen by the adversary that are assigned to the elements of $\cM$ via a random bijection $\weight: E \rightarrow W$.

\subsection{Approximation for AO-RA-MN models}
\label{subsec:aoramn_aoramu}
We start by deriving an $O(\log{n}\log{r})$ approximation algorithm for the AO-RA-MN model.
Our algorithm basically tries to ignore the the loops and only focuses on the non-loop elements.
We design our algorithm in two phases.
In the first phase we design a randomized algorithm that works even in the AO-AA-MU model assuming that it has a good estimate on the weight of the largest non-loop element. In particular, fix bijection $\weight: W\rightarrow E$, and let $e^*_1$ be the largest non-loop element with respect to $\weight$, and $e^*_2$ be the second largest one. We assume that the algorithm knows a bound  $\weight(e^*_2) < \L < \weight(e^*_1)$ on the largest non-loop element in advance. We show there is a thresholding algorithm, with a {\em non-fixed} threshold, that achieves an $O(\log{r})$ fraction of the optimum (see subsection \ref{subsubsec:aoramn_bound}).

%

In order to solve the original problem, in the second phase we divide the non-loop elements into a set of blocks $B_{1}, B_{2}, \ldots, B_{\log{n}}$, and we use the previous algorithm as a module to get an $O(\log{r})$ of optimum within each block (see subsection \ref{subsubsec:aoramn_general}).

\subsubsection{Approximation for AO-RA-MN model, with an estimate on the largest weight}
\label{subsubsec:aoramn_bound}
Let us start by the first phase.  Since our algorithm works in a more general model, here we assume that we are in the AO-AA-MU model, i.e. assuming that both the ordering of the elements and assignments of the weights are chosen adversarially, and the algorithm knows nothing except a bound $\weight(e^*_2) < \L< \weight(e^*_1)$ on the largest non-loop element.
  We design a randomized $O(\log{r})$ approximation algorithm for this model.

Note that if $r$ is also known in advance then a simple variant of the  thresholding algorithm of Babaioff  et al. \cite[ThresholdPrice Algorithm]{BIK07} would be a $O(\log{r})$ approximation. Indeed it is sufficient  to select a threshold $\L/2^{i}$, for $0\leq i\leq \log{r}$ uniformly at random, and then include all the elements above the threshold that are independent of the elements chosen so far. Here, since we do not know $r$, our algorithm keeps track of the rank of the elements seen so far, and tries to update the threshold according to it. In particular, once the rank of the elements seen so far reaches $2^{i}$, the algorithm inserts a new threshold dynamically and works with it as if it exists since the beginning of the algorithm. The details are described in Algorithm \ref{alg:threshold_aoramn}:

\begin{algorithm}[htb]
\caption{Algorithm for AO-AA-MU model, when an estimate of the largest non-loop element is known}
\label{alg:threshold_aoramn}
\begin{algorithmic}[1]
\INPUT The bound $L$ such that $\weight(e^*_2)< \L< \weight(e^*_1)$.
\OUTPUT An independent set $\alg \subseteq E$.
\STATE with probability $1/2$, pick a non-loop element with weight above $\L$ and return it. \label{alg:step:largest_element}
\STATE $\alg \leftarrow \emptyset$ and $r^* \leftarrow 2$.
\STATE set threshold $w^* \leftarrow \L/2$.
\FOR {each arriving element $e_i$}
\IF {$\weight(e_i) > w^*$ and $ALG \cup \{e_i\}$ is independent}
\STATE $\alg \leftarrow \alg \cup \{e_i\}$
\ENDIF
\IF {$rank(\{e_1,\ldots,e_i\}) \geq r^*$}
\STATE with probability $\frac{1}{\log 2r^*}$ set $w^* \leftarrow \L/2r^*$.\label{alg:step:probability_update}
\STATE $r^* \leftarrow 2r^*$.
\ENDIF
\ENDFOR
\RETURN \alg
\end{algorithmic}
\end{algorithm}
\def\Ef{{\cal E}_1}
Let $\Ef$ be the event the algorithm chooses the option in step \ref{alg:step:largest_element}.
Also let $r^*(t)$ and $w^*(t)$ be the value of $r^*$ and $w^*$, respectively, after observing the first $t$ elements of the input. In particular, $r^*(n)$ will be the rank of $\cM$, and $w^*(n)$ will be the final value of the threshold chosen by the algorithm.
The following observation describes some properties of the algorithm:

\begin{observation}
\label{obs:alg:aoramn}
Assuming $\neg\Ef$, for any matroid of rank $r$, observe that  $r^*(n)$ in the algorithm will be the smallest power of 2 greater than $r$ (i.e. $r^*(n) \leq 2r$). Therefore, the algorithm will choose between at most $\log{(2r)}$ different thresholds, where for each $i$, the threshold $w^*(t)$ will be decreased to $\L/2^i$  at the first time $t(i)$ where $rank(e_1,\ldots,e_{t(i)}) = 2^{i-1}$, with probability $1/i$.

Hence, by applying a simple induction it is not hard to see that at any time $t$ in the execution of the algorithm,
\begin{equation}
\label{eq:equalprob_threshold}
1 \leq i \leq \log{r^*(t)}, ~~\P{w^*(t)=\frac{\L}{2^i} \Big| \neg\Ef}=1/\log{r^*(t)},
\end{equation}
where the probability is over all of the randomization in the algorithm.
\end{observation}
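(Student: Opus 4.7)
The plan is to split the observation into two pieces: its deterministic content (the bounds on $r^*(n)$ and the number of distinct thresholds, and the identification of the update times $t(i)$), and the single probabilistic claim~(\ref{eq:equalprob_threshold}), which I would prove by induction on $\log r^*(t)$.

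The deterministic content follows from direct inspection of the update rule. Since $r^*$ is initialized to $2$ and step~10 doubles it only when the rank of the elements seen so far (a monotone quantity bounded by $r$) matches $r^*$, the variable $r^*$ terminates at the smallest power of $2$ that is at least $r$, which gives $r^*(n) \leq 2r$ and at most $\log r^*(n) \leq \log(2r)$ distinct candidate thresholds $\L/2^i$ for $1 \leq i \leq \log r^*(n)$. The same inspection pins down $t(i)$: the first time the rank reaches $2^{i-1}$, the variable $r^*$ equals $2^{i-1}$ by the doubling rule, so step~9 fires and sets $w^* \leftarrow \L/(2r^*) = \L/2^i$ with probability $1/\log(2 \cdot 2^{i-1}) = 1/i$.

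For equation~(\ref{eq:equalprob_threshold}) I would induct on $k := \log r^*(t)$. The base case $k=1$ is trivial since $w^*$ is initialized to $\L/2$ deterministically. For the inductive step, let $t^-$ denote the last time with $r^*(t^-) = 2^{k-1}$; by the inductive hypothesis, $w^*(t^-)$ is uniform on $\{\L/2,\ldots,\L/2^{k-1}\}$ conditional on $\neg\Ef$. At the next step the algorithm either, with probability $1/k$, sets $w^*$ to the new value $\L/2^k$, or, with probability $(k-1)/k$, keeps the old uniformly-distributed value. A short case split then yields $\P{w^*(t(k)) = \L/2^i \mid \neg\Ef} = 1/k$ for every $1 \leq i \leq k$; since $w^*$ and $r^*$ stay fixed until the rank reaches $2^k$, the claimed uniform distribution persists throughout the interval.

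The only subtlety worth verifying is that the Bernoulli coin used at each execution of step~9 is independent of all earlier randomness, which is automatic since that coin is a fresh random bit drawn at the moment of the update. Beyond this I do not anticipate any obstacle: the observation is essentially a bookkeeping statement, asserting that a geometrically spaced mixture of independent $\mathrm{Bernoulli}(1/j)$ decisions produces a uniform distribution over the currently active thresholds.
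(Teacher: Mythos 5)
Your proposal is correct and matches the route the paper gestures at: the paper simply asserts the observation ``by applying a simple induction,'' and your proof fills in exactly that induction (base case $\log r^*(t)=1$ from the initialization $w^* \leftarrow \L/2$; inductive step mixing the uniform distribution on $\{\L/2^i : 1\le i\le k-1\}$ with the fresh $\mathrm{Bernoulli}(1/k)$ coin that sets $w^*\leftarrow \L/2^k$), together with the straightforward inspection that $r^*$ doubles from $2^{i-1}$ to $2^i$ at the first time the observed rank reaches $2^{i-1}$, yielding $r^*(n)\le 2r$. The only cosmetic slip is the claim that ``$r^*$ equals $2^{i-1}$'' at the first time rank reaches $2^{i-1}$, which is off by one for $i=1$ (there $r^*=2$ while rank is $1$); but the $i=1$ threshold is set deterministically at initialization and handled by your base case, so the induction is unaffected.
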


\begin{theorem}
\label{thm:alg:aoramn_L}
For any  matroid $\cM=(E,\cI)$ of rank $r$, and any bijection $\weight:E\rightarrow W$, given the bound  $\weight(e^*_2) <\L< \weight(e^*_1)$, Algorithm \ref{alg:threshold_aoramn} is a $16\log{r}$ approximation in the AO-AA-MU model. i.e.
$$\E{w(\alg(\weight))} \geq \frac{1}{16\log{r}} w(\opt(\weight)),$$
where the expectation is over all of the randomization in the algorithm.
\end{theorem}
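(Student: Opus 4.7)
The plan is to split the analysis along the two branches of Algorithm \ref{alg:threshold_aoramn}. Branch \ref{alg:step:largest_element} (taken with probability $1/2$) deterministically picks $e^*_1$---the unique non-loop element of weight above $L$---and so contributes $\tfrac{1}{2}\weight(e^*_1)$ to $\E{w(\alg)}$. For the thresholding branch I will show that, conditional on $\neg\mathcal{E}_1$, it collects at least a $\Theta(1/\log r)$-fraction of $w(\opt'')$, where $\opt''$ is a maximum-weight independent set in $E\setminus\{e^*_1\}$. Since $w(\opt)\le \weight(e^*_1)+w(\opt'')$ and $L\le \weight(e^*_1)$, combining these two estimates will give $\E{w(\alg)}\ge w(\opt)/(16\log r)$.

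To prove the conditional bound I would decompose $\opt''$ by dyadic weight layers $L_i=\{e:L/2^i<\weight(e)\le L/2^{i-1}\}$ for $i\ge 1$; every element of $\opt''$ lies in some such $L_i$ because $\weight(e^*_2)<L$, and the contribution of layers $i>\log r^*(n)$ to $w(\opt'')$ is at most $r\cdot(L/r)=L\le \weight(e^*_1)$, which can be absorbed. For each $i\in\{1,\ldots,\log r^*(n)\}$ I would condition on the event $\mathcal{F}_i=\{w^*(n)=L/2^i\}$, whose probability equals $1/\log r^*(n)$ by Observation \ref{obs:alg:aoramn}. Tracing the algorithm's updates shows that under $\mathcal{F}_i$ the threshold is strictly above $L/2^i$ throughout $[0,T_{i-1})$---where $T_{i-1}$ is the deterministic time at which the observed rank first reaches $2^{i-1}$---and equals $L/2^i$ for all $t\ge T_{i-1}$. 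In particular every element of $\alg$ has weight $>L/2^i$ under $\mathcal{F}_i$, yielding $w(\alg)\ge(L/2^i)\,|\alg|$.

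The heart of the proof, and the main obstacle, is a matroid-exchange step: under $\mathcal{F}_i$, $\alg$ is a basis of $\alg\cup O_i$, where $O_i$ denotes the elements of $\opt\cap L_i$ arriving strictly after $T_{i-1}$. Any $e\in O_i\setminus\alg$ passes the threshold test ($\weight(e)>L/2^i=w^*(t_e)$), so its rejection must be for matroid reasons, i.e. $e\in\text{span}(\alg_{t_e})\subseteq\text{span}(\alg)$; this establishes maximality and gives $|O_i|\le|\alg|$. The complementary piece $(\opt\cap L_i)\setminus O_i$ is an independent subset of a set of rank $\le 2^{i-1}$, hence has at most $2^{i-1}$ elements and weight at most $L$. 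Combined with $w(\opt\cap L_i)\le(L/2^{i-1})\,|\opt\cap L_i|$ and $w(\alg)\ge(L/2^i)\,|\alg|$, this yields $w(\alg)\ge \tfrac{1}{2}w(\opt\cap L_i)-\tfrac{L}{2}$ under $\mathcal{F}_i$. Averaging over $i$ against the weights $\P{\mathcal{F}_i\mid\neg\mathcal{E}_1}=1/\log r^*(n)$, combining with the step \ref{alg:step:largest_element} contribution, and using $L\le \weight(e^*_1)$ to absorb the $L/2$ slack (together with $\log r^*(n)\le\log(2r)\le 2\log r$) delivers the $1/(16\log r)$ bound.
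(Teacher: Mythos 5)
Your proof is correct and follows essentially the same structure as the paper's: split on whether step \ref{alg:step:largest_element} is taken, decompose $\opt$ into dyadic weight layers, condition on the final threshold value $w^*(n) = L/2^i$ (using Observation \ref{obs:alg:aoramn}), bound the algorithm's catch in each layer via the fact that elements above threshold not taken are spanned by $\alg$, and absorb the slack using $\weight(e^*_1) \geq L$. The one cosmetic difference is that the paper's Claim \ref{cl:large_pi} handles the layer bound by a case split on $|P_i| \geq 2^i$ versus $|P_i| < 2^i$ (charging the small layers to $e^*_1$ in equation \eqref{eq:chargebyw1}), whereas you fold the $|(\opt\cap L_i)\setminus O_i|\le 2^{i-1}$ loss into a uniform additive $-L/2$ slack per layer; this is a cleaner accounting that in fact gives a slightly better constant after averaging.
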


Let us partition the elements of $\opt(\weight)$ according to their weights, where
 \begin{equation}
 \label{eq:pi_definition}
1\leq i\leq\log{2r}:~~ P_i=\left\{e\in \opt(\weight): \frac{\L}{2^{i}} < \weight(e) \leq \frac{\L}{2^{i-1}}\right\}.
 \end{equation}

First in the next claim, we show that conditioned on $w^*(n)=L/2^i$ (and $\neg\Ef$), the expected weight of $\alg(\weight)$, is a constant fraction of $w(P_i)$, unless the size of $|P_i|$ is very small. In the latter case as we will show in equation \eqref{eq:chargebyw1}, we may charge $w(P_i)$ by a $1/\log{r}$ fraction of $\weight(e^*_1)$. Since $\Ef$ occurs with constant probability, the algorithm achieves a constant fraction of $\weight(e^*_1)$ which completes the proof.

\begin{clm}
\label{cl:large_pi}
For any $1\leq i\leq\log{r}$, if $|P_{i}| \geq 2^i$, then
$$\E{w(\alg(\weight)) | w^*(n)=\frac{L}{2^i} \wedge \neg\Ef } \geq \frac{1}{4} w(P_{i}).$$
\end{clm}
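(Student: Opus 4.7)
The plan is to split the algorithm's execution at $\tau_i$, the first time the rank of the arrivals reaches $2^{i-1}$. Conditioning on $w^*(n) = L/2^i$ and $\neg\Ef$, the threshold $w^*$ is non-increasing, so $w^*(t) \ge L/2^i$ for every $t$; moreover, since $w^*$ is only updated at the rank-checkpoints and the update at $r^* = 2^{i-1}$ (which is triggered precisely at time $\tau_i$) is the one that sets $w^*$ to $L/2^i$, the conditioning forces the subsequent checkpoints not to fire an update, and hence $w^*(t) = L/2^i$ for every $t \ge \tau_i$.

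Next I would bound the number of $P_i$-elements arriving after $\tau_i$. Set $P_i^{>} := P_i \cap \{e_{\tau_i+1},\ldots,e_n\}$. Since $P_i \subseteq \opt(\weight)$ is independent, $|P_i \cap \{e_1,\ldots,e_{\tau_i}\}| \le rank(\{e_1,\ldots,e_{\tau_i}\}) = 2^{i-1}$, giving $|P_i^{>}| \ge |P_i| - 2^{i-1} \ge |P_i|/2$ by the hypothesis $|P_i| \ge 2^i$.

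The crucial step will be to show $|\alg(\weight)| \ge |P_i^{>}|$. Let $A_0 := \alg(\tau_i)$ and $T := \{e_{\tau_i+1},\ldots,e_n\} \cap \{e : \weight(e) > L/2^i\}$; note that $P_i^{>} \subseteq T$. After $\tau_i$, the algorithm is precisely a matroid greedy that processes $T$ in arrival order and adds each element that preserves independence with the current $\alg$. A standard matroid argument will show that the final set $\alg(\weight) = A_0 \cup S$ is a \emph{maximal} independent subset of $A_0 \cup T$: otherwise some $e \in T \setminus S$ would satisfy $A_0 \cup S + e \in \cI$, and hence also $A_0 \cup S' + e \in \cI$ for the subset $S' \subseteq S$ present when $e$ arrived, contradicting the algorithm's rejection of $e$. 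Therefore $|\alg(\weight)| = rank(A_0 \cup T) \ge rank(P_i^{>}) = |P_i^{>}|$, using that $P_i^{>} \subseteq T$ is independent.

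Finally, every element of $\alg(\weight)$ is accepted while the threshold is at least $L/2^i$, so each one has weight strictly greater than $L/2^i$. Combining,
\[
w(\alg(\weight)) \;\ge\; |\alg(\weight)| \cdot \frac{L}{2^i} \;\ge\; |P_i^{>}| \cdot \frac{L}{2^i} \;\ge\; \frac{|P_i|}{2} \cdot \frac{L}{2^i} \;=\; |P_i| \cdot \frac{L}{2^{i+1}} \;\ge\; \frac{w(P_i)}{4},
\]
where the last inequality uses $w(P_i) \le |P_i| \cdot L/2^{i-1}$ from~\eqref{eq:pi_definition}. Since this inequality holds deterministically under the conditioning (the residual randomness in the pre-$\tau_i$ updates affects only the composition of $A_0$, which does not enter the final bound), it passes to the conditional expectation. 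The main obstacle will be the matroid-greedy step: one must verify that a greedy started from a nonempty independent set $A_0$ still yields a maximal independent set in $A_0 \cup T$, a consequence of the matroid augmentation property.
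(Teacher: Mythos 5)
Your proof is correct and follows essentially the same route as the paper: split at the checkpoint time $\tau_i = t(i)$ when the rank first reaches $2^{i-1}$, observe that $w^*(n)=L/2^i$ forces the threshold to equal $L/2^i$ for all $t \ge \tau_i$, bound the number of $P_i$-elements seen before $\tau_i$ by $2^{i-1}$ using independence, and conclude $|\alg(\weight)| \ge |P_i| - 2^{i-1}$ by a greedy/maximality argument. You actually spell out two points the paper leaves implicit: (a) the paper states only $w^*(t)\ge L/2^i$, but the argument needs $w^*(t)= L/2^i$ after $\tau_i$ so that elements of weight in $(L/2^i, L/2^{i-1}]$ pass the threshold test — your observation that the only checkpoint which can set $w^*\!=\!L/2^i$ is the one at $\tau_i$ closes that gap; and (b) you make explicit the matroid augmentation reasoning behind $|\alg(\weight)|\ge |P_i\setminus E_{t(i)}|$ (the post-$\tau_i$ phase is a greedy that produces a basis of $A_0\cup T$). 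These are useful clarifications, but the overall structure and the final chain of inequalities match the paper's.
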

\begin{proof}
Let $E_i=\{e_1, \ldots, e_{i}\}$ be the set of the first $i$ elements. Recall that $t(i)$
is the first time $t$ where $rank(E_{t}) = 2^{i-1}$.
Since $P_i\subseteq \opt(\weight)$ is an independent set of $\cM$,  
we have $|P_i \cap E_{t(i)}| \leq 2^{i-1}$. In other words, we must have seen at most $2^{i-1}$ elements of the set $P_i$ by the time $t(i)$.

Suppose $w^*(n)=L/2^i$; since  $w^*(t)$ is a non-increasing function of $t$ (with probability 1), we get $w^*(t)\geq L/2^i$.
Since  $P_i\setminus E_{t(i)}$ is an independent set and all its elements will come after $t(i)$, we get $|\alg(\weight)| \geq |P_i\setminus E_{t(i)}| \geq |P_i| - 2^{i-1}$ by the end of the algorithm. But all these elements are greater than $w^*(n)=L/2^i$, thus:
$$
\E{w(\alg(\weight)) | w^*(n)=\frac{L}{2^i} \wedge \neg\Ef}
\geq  |P_i\setminus E_{t(i)}| \frac{L}{2^i}
\geq \frac{|P_i| L }{2^{i+1}} \geq
\frac{1}{4} w(P_i),
$$
where the last inequality follows from equation \eqref{eq:pi_definition}.
\end{proof}



Now we are ready to prove Theorem \ref{thm:alg:aoramn_L}

\begin{proofof}{Theorem \ref{thm:alg:aoramn_L}}
Using the above claim we may simply compute the overall performance of the algorithm:
\begin{eqnarray}
\E{w(\alg(\weight))} &=& \frac{1}{2}\E{w(\alg(\weight)) | \Ef} + \frac{1}{2}\E{w(\alg(\weight)) | \neg\Ef} \nonumber \\
&\geq & \frac{1}{2} \weight(e^*_1) + \frac{1}{2} \sum_{i:|P_i| \geq 2^i}  \E{w(\alg(\weight)) \Big| w^*(n) = \frac{\L}{2^i}\wedge\neg\Ef} \P{w^*(n) = \frac{L}{2^i} \Big| \neg \Ef} \nonumber \\
&\geq& \frac{\weight(e^*_1)}{2} + \frac{1}{2} \sum_{i:|P_i|\geq 2^i} \frac{w(P_i)}{4} \frac{1}{\log{2r}} \label{eq:boundlargepi}\\
&\geq& \frac{\weight(e^*_1)}{4} + \sum_{i=1}^{\log{2r}} \frac{2L}{8\log{2r}} +\frac{1}{2} \sum_{i:|P_i|\geq 2^i} \frac{w(P_i)}{4} \frac{1}{\log{2r}} \label{eq:chargebyw1}\\
&\geq & \frac{\weight(e^*_1)}{4} + \sum_{i=1}^{\log{2r}} \frac{w(P_i)}{8\log{2r}},\label{eq:boundeverythingbyL}
\end{eqnarray}
where   inequality \eqref{eq:boundlargepi} follows from equation \eqref{eq:equalprob_threshold} and Claim \ref{cl:large_pi}, inequality \eqref{eq:chargebyw1} follows from the assumption $\weight(e^*_1)\geq L$, and   inequality \eqref{eq:boundeverythingbyL} follows from $w(P_i)\leq |P_i|\frac{L}{2^{i-1}} \leq 2L$ for $|P_i|\leq 2^i$.

The theorem  simply follows from the fact that $w(\opt(\weight))\leq 2(\weight(e^*_1) + \sum w(P_i))$.
\end{proofof}

Before describing our algorithm for the AO-RA-MN model, we  prove a bound on the performance
of  algorithm \ref{alg:threshold_aoramn} when the bound $L$ can be much larger than the maximum weight (i.e. $\weight(e^*_1) \ll L)$. This  may happen as a special case when we want to apply Algorithm \ref{alg:threshold_aoramn} as a subroutine.
\begin{corollary}
\label{cor:moreon_threshold_aoramn}
For any matroid $\cM=(E,\cI)$ of rank $r$, and any bijection $\weight:E \rightarrow W$, given any bound $L > \weight(e^*_2)$ we have
\begin{equation}
\label{eq:largeL_case}
\E{w(\alg(\weight))} \geq \max\left(0, \frac{w(\opt(\weight))}{16\log{r}} - 2\L\right).
\end{equation}
If in addition $L < \weight(e^*_1)$, then
\begin{equation}
\label{eq:Lpicking_prob}
\E{w(\alg(\weight))} \geq \frac{\weight(e^*_1)}{2}.
\end{equation}
\end{corollary}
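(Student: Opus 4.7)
The plan is to verify both bounds by a light reworking of the proof of Theorem~\ref{thm:alg:aoramn_L}; the second bound is nearly immediate, and the first bound comes from isolating exactly where the hypothesis $\weight(e^*_1) > L$ was used there. First I would dispatch \eqref{eq:Lpicking_prob}: with probability $1/2$ the algorithm commits to step~\ref{alg:step:largest_element}, i.e.\ to returning the first arriving non-loop element of weight exceeding $L$. Under the standing assumption $L > \weight(e^*_2)$ the only such element is $e^*_1$; under the additional assumption $L < \weight(e^*_1)$ it does exist, and the algorithm returns it. This contributes $\tfrac12\weight(e^*_1)$ to $\E{w(\alg(\weight))}$.

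For \eqref{eq:largeL_case} the inequality is trivial when its right-hand side is non-positive. If $L < \weight(e^*_1)$, Theorem~\ref{thm:alg:aoramn_L} directly yields $\E{w(\alg(\weight))} \ge w(\opt(\weight))/(16\log r)$, which is even stronger. So I assume $L \ge \weight(e^*_1)$; then no non-loop element has weight exceeding $L$, the step~\ref{alg:step:largest_element} branch returns nothing, and only $\neg\Ef$ contributes. I would redo the chain \eqref{eq:boundlargepi}--\eqref{eq:boundeverythingbyL} after throwing away the $\weight(e^*_1)/2$ slack on which those steps leaned. The key observation is that the blocks $P_i$ from \eqref{eq:pi_definition} (for $1 \le i \le \log 2r$) cover exactly the elements of $\opt(\weight)$ with weight in $(L/2r,\,L]$: the assumption $L \ge \weight(e^*_1)$ rules out any heavier element, while the remaining opt-elements each have weight at most $L/2r$ and there are at most $r$ of them, totalling at most $L/2$. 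Hence $\sum_i w(P_i) \ge w(\opt(\weight)) - L/2$.

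Plugging Claim~\ref{cl:large_pi} into the $\neg\Ef$ branch and discarding the ``small'' blocks with $|P_i|<2^i$, each of which contributes less than $2L$ (and thus less than $2L\log(2r)$ altogether), yields
\[
\E{w(\alg(\weight))} \;\ge\; \tfrac12\,\E{w(\alg(\weight))\mid \neg\Ef}
 \;\ge\; \frac{w(\opt(\weight)) - L/2 - 2L\log(2r)}{8\log(2r)},
\]
from which \eqref{eq:largeL_case} follows by absorbing both additive error terms into the allowed $-2L$ slack and using $\log(2r)\le 2\log r$ to pass from $1/(8\log 2r)$ down to $1/(16\log r)$. The one place where I expect to have to be careful is precisely this bookkeeping step: the two losses (tail weight $\le L/2$ and discarded small-block weight $\le 2L\log(2r)$) must cancel the denominator $\log(2r)$ rather than multiply it, so one must quantify the discarded blocks \emph{before} the Claim~\ref{cl:large_pi} averaging, not after. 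Since the Claim~\ref{cl:large_pi} hypothesis is ``$|P_i| \ge 2^i$'', this ordering is in any case forced on us, and the estimate goes through cleanly.
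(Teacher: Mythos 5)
Your proof of \eqref{eq:Lpicking_prob} coincides with the paper's (one line each: step~\ref{alg:step:largest_element} succeeds with probability $1/2$ and returns $e^*_1$). For \eqref{eq:largeL_case} your route genuinely differs from the paper's. The paper argues by a \emph{weight-modification reduction}: define $\weight'$ agreeing with $\weight$ except that $\weight'(e^*_1) := L + \weight(e^*_1) \in (L, 2L]$, apply Theorem~\ref{thm:alg:aoramn_L} to $\weight'$ (which now satisfies the hypothesis $\weight'(e^*_1) > L > \weight'(e^*_2)$), and then transfer the bound back via the coupling inequality $\E{w(\alg(\weight))} \ge \E{w(\alg(\weight'))} - 2L$. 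That transfer step is stated tersely and hides a small matroid argument --- one needs that, coupling the algorithm's internal randomness, $\alg(\weight') \subseteq \alg(\weight) \cup \{e^*_1\}$ (which follows from the invariant that the $\weight$-stream's selected set always lies in the span of the $\weight'$-stream's selected set) --- but it is a clean black-box use of the theorem. You instead reopen the proof of Theorem~\ref{thm:alg:aoramn_L}, drop the $\weight(e^*_1)/2$ term that \eqref{eq:chargebyw1} leans on, and re-account for the tail weight $\le L/2$ and the discarded small blocks $\le 2L\log(2r)$ directly, absorbing both into the $-2L$ slack after dividing by $8\log(2r)$. Your bookkeeping checks out (indeed it gives the slightly stronger bound $w(\opt)/(8\log 2r) - 5L/16$), and your observation that the ordering of ``discard small blocks, then average'' is forced is correct. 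So: both proofs are correct; the paper's is shorter and reuses Theorem~\ref{thm:alg:aoramn_L} as a black box at the cost of a coupling argument that deserves more care than the paper gives it, while yours is self-contained and makes the dependence on $L$ fully explicit, at the cost of redoing the chain \eqref{eq:boundlargepi}--\eqref{eq:boundeverythingbyL}.
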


\begin{proof}
To prove the first inequality, note that if $\L<\weight(e^*_1)$, then we are done, otherwise suppose that we increase the weight of $e^*_1$ to $L+\weight(e^*_1)$.  Define $\weight'=\weight$ on all elements, except $\weight'(e^*_1)=L+\weight(e^*_1) \leq 2L$. Then by Theorem \ref{thm:alg:aoramn_L}, we have
$$
\E{w(\alg(\weight'))} \geq \frac{w(\opt(\weight'))}{16\log{r}} = \frac{w(\opt(\weight))+L}{16\log{r}}
$$
On the other hand, since in the worst case $\alg(\weight)$  does not have $e^*_1$, while $\alg(\weight')$ has it, we have $\E{w(\alg(\weight))} \geq \E{w(\alg(\weight'))} - 2L$.  Therefore
$$
\E{w(\alg(\weight))} \geq \frac{w(\opt(\weight))+\L}{16\log{r}} - 2\L \geq \max\left(0, \frac{w(\opt(\weight))}{16\log{r}} - 2\L\right).
$$
 The second inequality can be proved simply by noting that  the algorithm picks $e^*_1$ in step \ref{alg:step:largest_element} with  probability 1/2.
\end{proof}

\subsubsection{Approximation for AO-RA-MN by a general reduction}
\label{subsubsec:aoramn_general}

Now we are ready to describe our final algorithm for AO-RA-MN model without knowing $\L$ in advance (here, unlike the previous algorithm we will use the random assignment of weights). The idea is to only consider the non-loop elements and divide them into a set of blocks $B_1,B_2,\ldots,B_{\log{2n}}$ such that $|B_i|=2^i$ (note that the number of non-loop elements can be quite smaller than $n$, but we do not know it in advance). After observing the first $i$ blocks, we would have a good guess on the largest weight of the next block. Using that guess as a bound $\L$, with probability $1/\log{(2n)}$, we  run Algorithm \ref{alg:threshold_aoramn} on block ${i+1}$  and return its solution as the final answer. The details are described in Algorithm \ref{alg:aoramn}.

\begin{algorithm}[htb]
\caption{Algorithm for AO-RA-MN model}
\label{alg:aoramn}
\begin{algorithmic}[1]
\INPUT $n$, the number of elements.
\OUTPUT An independent set $\alg \subseteq E$.
\STATE Choose a number $0\leq b\leq \log{n}$ uniformly at random. \label{alg:step:randomblock}
\STATE Observe the first $2^b -1$ non-loop elements without picking any of them, and let $\L(b)$ be the largest weight among these non-loop elements.
\STATE Run Algorithm \ref{alg:threshold_aoramn} only on the next $2^b$ non-loop elements (ignore loops), with parameters $n=2^b$ and $\L=\L(b)$, and return its output.
\end{algorithmic}
\end{algorithm}

The next theorem proves the correctness of the algorithm
\begin{theorem}
\label{thm:aoramn}
For any matroid $\cM=(E,\cI)$ of rank $r$, Algorithm \ref{alg:aoramn} is a $O(\log{r}\log{n})$ approximation in the AO-RA-MN model.
\end{theorem}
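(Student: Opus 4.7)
The plan is to show that among the $\log n + 1$ possible values of $b$, there is a ``target'' value $b^*$ for which the algorithm conditioned on $b = b^*$ achieves $\Omega(w(\opt)/\log r)$ in expectation. Since the algorithm draws $b$ uniformly in $\{0,1,\ldots,\log n\}$, we have $\P{b = b^*} = 1/(\log n + 1)$, so $\E{w(\alg)} \geq \Omega(w(\opt)/(\log r \log n))$ as desired.

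Concretely, I would take $b^* \approx \lfloor \log T \rfloor - 1$, where $T$ is the (unknown) number of non-loop elements, so that both the observation window and the block $E_{b^*}$ contain $\Omega(T)$ non-loop elements. Let $w^*_1 > w^*_2 > w^*_3 > \cdots$ denote the non-loop weights sorted in decreasing order, and set $M = w^*_1$. By the uniformly random bijection, two useful events each occur with probability $\Omega(1)$: (a) the element receiving weight $w^*_1$ lands in $E_{b^*}$ while the element receiving $w^*_2$ lands in the observation window; (b) the element receiving $w^*_1$ lands in the observation window. Under event (a), $L(b^*) = w^*_2$ sits strictly between the block's second-max (which is bounded by $w^*_3$) and its max ($w^*_1 = M$), so Case 2 of Corollary~\ref{cor:moreon_threshold_aoramn} delivers $\E{w(\alg) \mid b = b^*} \geq M/32$. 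Under event (b), $L(b^*) = M$ and Case 1 of Corollary~\ref{cor:moreon_threshold_aoramn} gives $\E{w(\alg) \mid b = b^*} \geq w(\opt(\cM|E_{b^*}))/(16 \log r) - 2M$. I would then split into two regimes: if $w(\opt) \leq C \log r \cdot M$ for an absolute constant $C$ (concentrated OPT), event (a) alone yields $\Omega(M) = \Omega(w(\opt)/\log r)$; if $w(\opt) > C \log r \cdot M$ (spread OPT), event (b) combined with the partition-pigeonhole $\sum_{b'} w(\opt(\cM|E_{b'})) \geq w(\opt(\cM))$ (since $\opt(\cM) \cap E_{b'}$ is always independent in $\cM|E_{b'}$) lets us pick a suitable $b^*$ so that $w(\opt(\cM|E_{b^*}))$ captures a significant fraction of $w(\opt)$ and the additive error $-2M$ is absorbed because $M \ll w(\opt)/\log r$.

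The main obstacle lies in the spread regime: the naive partition pigeonhole only guarantees a $1/(\log n + 1)$ fraction of OPT in the best single block, which would cost an additional $\log n$ factor. To obtain the tight $O(\log r \log n)$ bound, I would strengthen the analysis by decomposing $\opt(\cM)$ into weight levels $P_i = \{e \in \opt : M/2^i < \omega(e) \leq M/2^{i-1}\}$ as in the proof of Theorem~\ref{thm:alg:aoramn_L}, identifying for each heavy level $P_i$ (those with $|P_i| \geq 2^i$) the block whose size matches that cardinality, and arguing via the random assignment that a constant fraction of $w(P_i)$ lies in that block and is captured by Algorithm~\ref{alg:threshold_aoramn} with the level-appropriate threshold $L/2^i$. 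Summing these contributions across the $O(\log r)$ heavy weight levels then yields $\Omega(w(\opt)/\log r)$ for some $b$, and the uniform draw over $\log n + 1$ values of $b$ gives the claimed $O(\log r \log n)$ approximation.
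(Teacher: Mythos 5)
Your plan starts from a flawed premise: that there exists a single target block $b^*$ for which, conditioned on $b = b^*$, the algorithm obtains $\Omega(w(\opt)/\log r)$. This cannot hold in general — the optimum may be spread roughly evenly across the $\log n$ blocks, so that $\E{w(\opt \cap B_j)} = O(w(\opt)/\log n)$ for every $j$, and no single conditional block gain exceeds $O(w(\opt)/(\log r \log n))$; multiplying by $\P{b=b^*} = 1/(\log n +1)$ would then give only $O(\log r\log^2 n)$. You correctly flag this, but the fix you propose does not close the gap. Decomposing $\opt(\cM)$ into weight levels $P_i$ and ``identifying for each heavy level $P_i$ ... the block whose size matches that cardinality'' relies on the claim that a constant fraction of $w(P_i)$ lies in the matching-size block $B_i$. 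This is false: under the random bijection the $|P_i|\geq 2^i$ elements of level $P_i$ land uniformly among the $m$ non-loop positions, so $B_j$ of size $2^j$ receives a $\sim 2^j/m$ fraction of them, not a constant fraction; most of $P_i$ lands in the last few blocks. Moreover, even if each heavy level could be ``assigned'' to one block, different levels would be assigned to different blocks, so there is still no single $b$ that accumulates $\Omega(w(\opt)/\log r)$, contradicting your stated conclusion ``yields $\Omega(w(\opt)/\log r)$ for some $b$''.

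The paper avoids all of this by never looking for a single $b^*$ and instead summing conditional gains over \emph{every} block. The right unit of analysis is a per-block lemma (Claim~\ref{cl:Oi_bound} in the paper): conditioned on $b=i$ and on the multiset of weights assigned to the prefix $F_i$, with constant probability the top weight in $F_i$ lands in $B_i$ and the second-top lands in the observation window, so $L(b)$ is a valid estimate and Theorem~\ref{thm:alg:aoramn_L} applies on the matroid restricted to $B_i$, giving $\E{w(\alg)\mid b=i} \geq \Omega\bigl(O_i/\log r\bigr)$, where $O_i := \E{w(\opt(\weight)\cap B_i)}$ (and $w(\opt(\cM|_{B_i})) \geq w(\opt(\cM)\cap B_i)$ since the latter is independent in the restriction). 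Then one sums: $\E{w(\alg)} = \sum_i \P{b=i}\E{w(\alg)\mid b=i} \geq \frac{1}{\log 2n}\sum_i \Omega(O_i/\log r) = \Omega\bigl(\E{w(\opt)}/(\log r\log n)\bigr)$, because the $B_i$ partition the non-loop elements so $\sum_i O_i = \E{w(\opt)}$. Linearity of expectation replaces any pigeonhole argument, which is exactly why no extra $\log n$ factor appears. Finally, note that the last block needs separate treatment (Corollary~\ref{cor:moreon_threshold_aoramn} with the $\max\{0,\cdot - 2L\}$ form), since its observation window is essentially the entire prefix and $L(b)$ can overshoot the block's own maximum; the paper absorbs the resulting $-2L'$ loss by separately showing $\E{w(\alg)} = \Omega(L'/\log n)$. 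These two ingredients — a per-block conditional bound via Theorem~\ref{thm:alg:aoramn_L} summed by linearity, plus the last-block patch — are what your proposal is missing.
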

Let  $F$ be the set of non-loop elements, $m:=|F|$, and let $F_i\subset F$ be the set of first $2^{i+1}-1$ non-loop elements (as a special case $F_{\log{m}}=F$. We divide the elements of $F$ into a set of blocks $B_0, B_1, \ldots, B_{\lfloor \log{m}\rfloor }$,  where $B_0:=F_0$, and for each $i>0$, $B_i:=F_i\setminus F_{i-1}$. Note that the size of the last block $|B_{\log{m}}| = m+1-2^{\lfloor \log{m} \rfloor}$ can be much smaller than $2^{\lfloor \log{m}\rfloor}$.

For a set of weights $W'\subset W$ and $E'\subset E$ of elements such that $|W'|=|E'|$, let $\cE_{W'}(E')$ be the event $\weight(E')=W')$.
Fix a set $W'\subset W$ of size $|W'|=|F|$. 
Throughout the proof we always condition on $\cE_{W'}(F)$. 
Define
\begin{equation}
0\leq i\leq \lfloor \log{m}\rfloor : ~~~O_i=\EE{\weight}{w(\opt(\weight) \cap B_i) | \cE_{W'}(F)},
\end{equation}
to be the expected value of the optimum set in each of the blocks.
We will show that
$$\EE{\weight}{w(\alg(\weight)) | \cE_{W'}(F)} \geq \frac{1}{2500 \log{r}\log{n}} \sum_{i=1}^{\log{m}} B_i.$$
In the next claim we show that conditioned on algorithm chooses $b=i$ in the step \ref{alg:step:randomblock}, it will get an $\Omega(1/\log{r})$ fraction of $O_{i}$.
Note that in this claim we do not analyze the special case of $b=\lfloor \log{m}\rfloor$.
\begin{clm}
\label{cl:Oi_bound}
If the algorithm chooses $b=i < \lfloor \log{m}\rfloor $ in step \ref{alg:step:randomblock}, it will get an $\Omega(1/\log{r})$ fraction of $O_i$:
$$
\EE{\weight}{w(\alg(\weight)) | b=i, {\cal E}_{W'}(F)} \geq \frac{1}{128\log{r}}O_i.
$$
\end{clm}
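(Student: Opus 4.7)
The plan is to isolate a favorable event $G$ under which $L(i)$ happens to be a valid threshold for the matroid-secretary sub-problem on $B_i$, apply Theorem~\ref{thm:alg:aoramn_L} to that sub-problem, and then use symmetry to relate the conditional expectation back to $O_i$. Define $G$ to be the event that, within $F_i$, the largest weight of $\omega$ is assigned to an element of $B_i$ while the second largest weight of $\omega$ is assigned to an element of $F_{i-1}$. Conditioned on $\cE_{W'}(F)$, the restriction $\omega|_{F_i}$ is a uniform bijection onto a conditionally random size-$|F_i|$ subset of $W'$, so a direct symmetry calculation gives
\begin{equation*}
\PP{\omega}{G \mid \cE_{W'}(F)} \;=\; \frac{|B_i|}{|F_i|}\cdot \frac{|F_{i-1}|}{|F_i|-1} \;=\; \frac{2^i}{2^{i+1}-1}\cdot\frac{2^i-1}{2^{i+1}-2} \;\geq\; \frac{1}{4}.
\end{equation*}

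On $G$, $L(i)$ coincides with the second largest weight of $\omega|_{F_i}$, which strictly dominates $\weight(e^*_2(B_i))$ (since only one of the top two weights of $F_i$ lies in $B_i$, the second largest weight in $B_i$ is at most the third largest in $F_i$) and is strictly dominated by $\weight(e^*_1(B_i))$ (which equals the largest weight of $\omega|_{F_i}$). Hence $L(i)$ meets the hypothesis of Theorem~\ref{thm:alg:aoramn_L} applied to $\cM|B_i$ with rank at most $r$, which yields
\begin{equation*}
\E{w(\alg(\omega)) \mid G,\, \omega} \;\geq\; \frac{w(\opt(\omega|_{B_i}))}{16 \log r} \;\geq\; \frac{w(\opt(\omega)\cap B_i)}{16 \log r},
\end{equation*}
where the last step uses that $\opt(\omega)\cap B_i$ is independent in $\cM|B_i$. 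Integrating out $\omega$ therefore gives
\begin{equation*}
\E{w(\alg) \mid b=i,\cE_{W'}(F)} \;\geq\; \frac{1}{16 \log r}\,\E{\b1[G]\,w(\opt(\omega)\cap B_i) \mid \cE_{W'}(F)},
\end{equation*}
so it suffices to show $\E{\b1[G]\,w(\opt(\omega)\cap B_i)\mid \cE_{W'}(F)} \geq O_i/8$.

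I would establish the missing inequality by a coupling argument: pair each bijection $\omega$ with the bijection $\omega'$ obtained by swapping the positions of the top two weights of $\omega|_{F_i}$. Non-trivial swaps occur exactly when those two positions split between $B_i$ and $F_{i-1}$, and in that case precisely one of $\omega,\omega'$ lies in $G$. A matroid-exchange argument shows that, restricted to such swaps, $w(\opt(\omega)\cap B_i)$ on the $G$-side weakly dominates its counterpart on the $\neg G$-side --- intuitively, placing the heavier of the two top weights into $B_i$ rather than $F_{i-1}$ can only weakly increase the contribution of $B_i$ to the global optimum. Averaging across the coupled pairs will then give $\E{w(\opt\cap B_i) \mid G,\cE_{W'}(F)} \geq O_i/2$, which combined with $\PP{}{G}\geq 1/4$ delivers the desired bound.

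The main obstacle is the final symmetry/coupling step: executing the matroid-exchange argument to show that $w(\opt(\omega)\cap B_i)$ really does weakly increase when we move the heavier of the top two weights of $F_i$ from $F_{i-1}$ to $B_i$ requires some care, since the global optimum can reconfigure nontrivially under such a swap. The first three paragraphs (definition of $G$, threshold validity, and application of Theorem~\ref{thm:alg:aoramn_L}) are essentially routine, whereas the coupling requires a careful case analysis and invocation of the exchange axiom.
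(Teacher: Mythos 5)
Your overall strategy matches the paper's proof almost exactly: both condition on the favorable event $G$ (top weight of $\omega|_{F_i}$ lands in $B_i$, second-top lands in $F_{i-1}$), both compute $\P{G}\geq 1/4$, both observe that under $G$ the observed $L(i)=s_2$ satisfies the strict bracketing hypothesis of Theorem~\ref{thm:alg:aoramn_L}, and both invoke that theorem on $\cM|B_i$. The paper fixes the set $\cE_S(F_i)$ and sums over $S$ at the end, which is exactly what your ``direct symmetry calculation'' is implicitly doing.

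The gap is in your final step, where you claim the swap coupling yields $\E{w(\opt\cap B_i)\mid G}\geq O_i/2$. Your coupling swaps only the \emph{positions} of the top two weights of $\omega|_{F_i}$, so it pairs $G$ with its mirror event $G_1$ (top weight in $F_{i-1}$, second in $B_i$) and is the identity on the two remaining configurations: $G_2$ (both top weights in $B_i$) and $G_3$ (both in $F_{i-1}$). The matroid-exchange inequality you state, $w(\opt(\omega)\cap B_i)\geq w(\opt(\omega')\cap B_i)$ for non-trivial swaps, is correct and provable via the greedy/matroid exchange argument, and it does give $\E{\b1_G\,w(\opt\cap B_i)}\geq \E{\b1_{G_1}\,w(\opt\cap B_i)}$. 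But this says nothing about the $G_2$ and $G_3$ contributions to $O_i$, and $\P{G_2}=\P{G_1}=\P{G}\approx 1/4$, so $G_2$ is not negligible; worse, $\E{w(\opt\cap B_i)\mid G_2}$ can genuinely exceed $\E{w(\opt\cap B_i)\mid G}$ (e.g.~in a free matroid with $s_1\approx s_2$ large). So ``averaging across coupled pairs'' does not give the asserted $O_i/2$ bound, and the argument as written does not close. What the paper does at this point (tersely) is a two-step monotonicity comparison rather than a single coupling: conditioning on $s_1\in B_i$ can only \emph{increase} $\E{w(\opt\cap B_i)}$, and then additionally conditioning on $s_2\notin B_i$ (which has conditional probability exactly $1/2$ given $s_1\in B_i$) loses at most a factor of $2$, because moving $s_2$ out of $B_i$ can drop $w(\opt\cap B_i)$ by at most $s_2\leq s_1$, which is already accounted for. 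In effect this bounds the $G_2$ contribution by a constant multiple of the $G$ contribution. To repair your proof you need to add a comparable argument for the trivial-swap configurations — either the monotonicity bound above, or a second coupling that moves a top weight between $B_i$ and $F_{i-1}$ to handle $G_2$ and $G_3$ — rather than relying on the single $G\leftrightarrow G_1$ swap.
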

\begin{proof}
Fix a set of weights $S=\{s_1>s_2>\ldots>s_{2^{i+1}-1}\} \subset W'$. Conditioned on $\cE_{S}(F_i)$, there is a constant probability that $s_1\in \weight(B_{i})$ and $s_2\notin \weight(B_{i})$; thus  $\L(b)=s_2$ will be a feasible bound for Algorithm \ref{alg:threshold_aoramn}.
Therefore, we may apply Theorem \ref{thm:alg:aoramn_L} and obtain $\Omega(\log{r})$ fraction of $O_i$. Thus
\begin{eqnarray}
&&\EE{\weight}{w(\alg(\weight)) |  \cE_{S}(F_i), b=i, \cE_{W'}(F)} \geq \nonumber \\
&&~~~~\geq \frac{1}{4}\EE{\weight}{w(\alg(\weight)) | s_1 \in B_{i}, s_2\notin B_{i}, \cE_{S}(F_i), b=i, \cE_{W'}(F)} \nonumber \\
&&~~~~\geq \frac{1}{64\log{r}}\EE{\weight}{w(\opt(\weight)\cap B_{i}) | s_1 \in B_{i}, s_2\notin B_{i}, \cE_{S}(F_i),  \cE_{W'}(F) }   \label{eq:usingthresholdaoramn}  \\
&&~~~~\geq \frac{1}{128\log{r}}\EE{\weight}{w(\opt(\weight)\cap B_{i}) |\cE_{S}(F_i), \cE_{W'}(F)}. \label{eq:optlowerbound}
\end{eqnarray}
Here inequality \eqref{eq:usingthresholdaoramn} follows from Theorem \ref{thm:alg:aoramn_L}, and
inequality \eqref{eq:optlowerbound}
    holds by noting that removing the condition $s_2 \notin B_{i}$ can only double the expectation of OPT, while removing $s_1\notin B_{i}$ may only decrease its expectation.
The claim simply follows by summing up inequality \eqref{eq:optlowerbound} over all events $\cE_S(F_i)$, for any $S\subset W', |S|=2^{i+1}-1$.

\end{proof}

Now we are ready to Prove Theorem \ref{thm:aoramn}

\begin{proofof}{Theorem \ref{thm:aoramn}}
We use Claim \ref{cl:Oi_bound} to lower bound the expected gain of the algorithm from all except the last block. We need to analyze $b=\lfloor \log{m}\rfloor$ differently.
Indeed if $B_{\lfloor \log{m}\rfloor} \ll m/2$, the bound $L(b)$ will be much larger than the largest weight in $\weight(B_{\lfloor \log{m}\rfloor})$ w.h.p. Therefore, we  apply
Corollary \ref{cor:moreon_threshold_aoramn} for this special case. Intuitively, the loss
incurs by misreporting the bound $L(\lfloor \log{m}\rfloor)$ is no more than the largest weight in $W'$, and this can be compensated simply by selecting the largest weight with constant probability.


Let $L'$ be the largest weight in $W'$.
By Corollary \ref{cor:moreon_threshold_aoramn} (equation \eqref{eq:largeL_case}), we obtain
$$
\EE{\weight}{w(\alg(\weight)) | b=\lfloor \log{m} \rfloor, {\cal E}_{W'}(F)} \geq \max\left(0, \frac{O_{\lfloor \log{m} \rfloor}}{16\log{r}} - 2\L'\right).
$$
Therefore, by Claim \ref{cl:Oi_bound} and the above inequality  we get:
\begin{eqnarray}
\EE{\weight}{w(\alg(\weight))} & =& \sum_{i=0}^{\lfloor \log{m} \rfloor} \EE{\weight}{w(\alg(\weight)) | b=i, {\cal E}_{W'}(F)} \PP{\weight}{b=i|{\cal E}_{W'}(F)} \nonumber \\
\label{eq:algOi_bounds}
&\geq& \frac{1}{\log{2n}} \left(\sum_{i=0}^{\lfloor \log{m} \rfloor-1} \frac{O_i}{128\log{r}} +  \max\{0, \frac{O_{\lfloor \log{m} \rfloor}}{16\log{r}} - 2\L'\} \right).
\end{eqnarray}

In order to lower bound the RHS it suffices to show that $\EE{\weight}{w(\alg(\weight)) | {\cal E}_{W'}} = \Omega(L'/\log{n})$. This simply follows from the second part of Corollary \ref{cor:moreon_threshold_aoramn}. For any block $B_i$, conditioned on $L'\in \weight(B_i)$,
 with probability $1/2$, the second largest weight in $\weight(F_i)$, is not assigned to $B_i$, in which case algorithm achieves $L'$ with probability $1/2$, once it chooses $b=i$:
\begin{eqnarray}
 \EE{\weight}{w(\alg(\weight)) | \cE_{W'}(F)} &=& \sum_{i=0}^{\log{m}}\frac{|B_i|}{\log{m}}\EE{\weight}{w(\alg(\weight)) | L'\in \weight(B_i), \cE_{W'}(F)}   \nonumber \\
  &=& \sum_{i=0}^{\log{m}} \frac{|B_i|}{\log{m}} \frac{\EE{\weight}{w(\alg(\weight)) | b=i, L'\in \weight(B_i), \cE_{W'}(F)}}{\log{2n}}\nonumber \\
 &\geq & \sum_{i=0}^{\log{m}}  \frac{|B_i|L'}{4\log{m}\log{2n}} = \frac{L'}{4\log{2n}}.
 \label{eq:algL_bounds}
 \end{eqnarray}


Therefore, by adding up equation \eqref{eq:algOi_bounds} and 8 times equation $\eqref{eq:algL_bounds}$ we obtain
\begin{eqnarray*}
 9\EE{\weight}{w(\alg(\weight)) |  {\cal E}_{W'}}& \geq &\frac{1}{\log{2n}} \left(\sum_{i=0}^{\lfloor \log{m} \rfloor-1} \frac{O_i}{128\log{r}} +  \max\{0, \frac{O_{\lfloor \log{m} \rfloor}}{16\log{r}} - 2\L'\} +2L'\right) \\
&\geq& \sum_{i=0}^{\lfloor \log{m} \rfloor} \frac{O_i}{128\log{r}\log{2n}} =
\Omega\left(\frac{1}{\log{r}\log{n}}\right) \EE{\weight}{w(\opt(\weight)) | {\cal E}_{W'}}.
\end{eqnarray*}
Summing both sides of the inequality over all events ${\cal E}_{W'}$ completes the proof.
\end{proofof}


\subsection{Matroid secretary with unknown $n$}
\label{sec:RO-AA-MU}

In this subsection we consider the primary variant of the matroid secretary problem.
When the total number of elements $n$ is known in advance
(RO-AA-MN model), there is  an $O(\log r)$-approximation which was designed in \cite{BIK07}
and is still the best known approximation for this problem.

Here we show a simple reduction which implies that if we do not have any information about the matroid or the number of elements
(the RO-AA-MU model), we can achieve an $O(\frac{1}{\eps} \log^{1+\eps} n \log r)$-approximation for any fixed $\eps>0$.

\begin{theorem}
Let $\cM$ be a matroid of rank $r$ on $n$ elements. If there is  an
$\alpha$ approximation algorithm for the matroid secretary
problem on $\cM$ in the RO-AA-MN model, then for any fixed $\eps>0$, there is also an $O(\frac{\alpha}{\eps} \log^{1+\eps} n )$-approximation
for the matroid secretary problem on $\cM$ with no information given in advance (the RO-AA-MU model).
\end{theorem}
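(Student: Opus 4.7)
The plan is a textbook \emph{guess-the-cardinality} reduction. The key observation is that if we retain only the prefix of length $m$ of a uniformly random permutation of $E$, the resulting sequence is a uniformly random $m$-subset $S\subseteq E$ arrayed in a uniformly random order, i.e.\ exactly the input distribution that a RO-AA-MN algorithm $A$ expects on the matroid restriction $\cM|S$ with cardinality parameter $\hat n=m$. Moreover the independence queries $A$ makes are supported only on elements it has already observed, which is legal in the MU oracle model on $\cM$. So the reduction reduces to picking $m$ well.

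Step 1 (the algorithm). Sample $k\ge 1$ from the distribution $p_k=c_\eps/k^{1+\eps}$, where $c_\eps=\Theta(\eps)$ is chosen so that $\sum_{k\ge 1}p_k\le 1$. Set $m:=2^k$ and passively watch the first $m$ elements of the stream (outputting $\emptyset$ if the stream terminates sooner). Feed these $m$ elements, in order of arrival, to $A$ with cardinality parameter $\hat n=m$, answering any independence query of $A$ with the corresponding query to the true $\cM$-oracle (legal, since $A$ only queries arrived elements). Return the independent set $A$ produces.

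Step 2 (analysis). Let $k^{*}=\lfloor\log_2 n\rfloor$, so $m_{k^{*}}=2^{k^{*}}\in[n/2,n]$. Condition on the event $k=k^{*}$, which has probability $p_{k^{*}}=\Omega\bigl(\eps/(\log n)^{1+\eps}\bigr)$. Let $S$ be the (random) set of $m_{k^{*}}$ elements that $A$ observes. By symmetry, $S$ is uniform over size-$m_{k^{*}}$ subsets of $E$ and, given $S$, its elements arrive in uniform random order; hence $A$ faces a valid RO-AA-MN instance on the restriction $\cM|S$ with $\hat n=|S|$, and outputs an independent set of expected weight at least $(1/\alpha)\,\EE{}{w(\opt(\cM|S))\mid S}$. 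Each element of $\opt(\cM)$ lies in $S$ with marginal probability $m_{k^{*}}/n\ge 1/2$ and $\opt(\cM)\cap S$ is independent in $\cM|S$, so by linearity $\EE{}{w(\opt(\cM|S))}\ge\EE{}{w(\opt(\cM)\cap S)}\ge\tfrac12 w(\opt(\cM))$. Combining, the conditional expected output is at least $w(\opt(\cM))/(2\alpha)$; removing the conditioning and using $p_{k^{*}}=\Omega(\eps/(\log n)^{1+\eps})$ gives $\EE{}{w(\alg)}=\Omega\bigl(\eps\,w(\opt(\cM))/(\alpha(\log n)^{1+\eps})\bigr)$, which is the claimed $O(\alpha\log^{1+\eps}n/\eps)$ competitive ratio. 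All other values of $k$ contribute non-negatively and can simply be ignored.

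The only genuinely delicate point—the ``main obstacle''—is verifying that handing $A$ the prefix of a random-order stream really does yield a valid RO-AA-MN instance on $\cM|S$: we must check simultaneously that (i) the elements form a uniformly random subset, (ii) their order is uniform given the subset, and (iii) $A$'s independence queries can be answered consistently with $\cM|S$ through the $\cM$-oracle restricted to arrived elements. Once these three coupling facts are in hand, the rest of the proof is a one-line calculation using the distribution $p_k\propto k^{-(1+\eps)}$.
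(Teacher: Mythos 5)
Your proof is correct and takes essentially the same approach as the paper: guess the order of magnitude of $n$ using a polynomial-tail distribution over exponents, run the RO-AA-MN algorithm on the observed prefix, and use the fact that a random prefix of length $\Theta(n)$ captures at least half of $\opt$ in expectation. The paper is terser about the coupling check you flag as the ``main obstacle,'' but the argument is the same.
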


\begin{proof}
We guess a number $n'$ according to a probability distribution with a polynomial tail, as follows:
let $n' = 2^i$ where $i \geq 0$ is chosen with probability
$$ p_i = \frac{\eps}{1+\eps} \cdot \frac{1}{(1+i)^{1+\eps}}.$$
This distribution is chosen so that $\sum_{i=1}^{\infty} p_i \leq 1$ (with the remaining probability, we do nothing);
this can be verified as follows:
$$ \sum_{i=0}^{\infty} \frac{1}{(1+i)^{1+\eps}} = 1 + \sum_{i=1}^{\infty} \frac{1}{(1+i)^{1+\eps}}
 \leq 1 + \int_0^\infty \frac{dx}{(1+x)^{1+\eps}} = 1 + \left[ - \frac{1/\eps}{(1+x)^\eps} \right]_0^\infty = 1 + \frac{1}{\eps}.$$
Then we run the $\alpha$-approximation algorithm as a black box, 
under the assumption that the number of elements is $n'$.

Assume that the actual number of elements is $n \in [2^i, 2^{i+1})$. With probability $p_i$, our guess of the number
of elements is $n' = 2^i$. If this happens, we retrieve $1/\alpha$ of the expected value of the optimal solution
on the first $n'$ elements. Since the elements arrive in a random order, the expected optimum on the first $n'$ elements
is at least $1/2$ of the actual optimum. Hence, in expectation we obtain at least
$$ p_i \frac{OPT}{2 \alpha} \geq \frac{\eps}{1+\eps} \cdot \frac{1}{(1+i)^{1+\eps}} \cdot \frac{OPT}{2 \alpha}
 \geq \frac{\eps}{4 \alpha} \frac{1}{(1+\log n)^{1+\eps}} OPT.$$
\end{proof}

Therefore, if we run  the $O(\log{r})$ approximation of Babaioff et al. \cite{BIK07} as a black box we achieve an $O(\frac{1}{\eps}\log^{1+\eps}{n}\log{r})$ for the RO-AA-MN model:
\begin{corollary}
For any fixed $\eps > 0$, there is an $O(\frac{1}{\eps}\log^{1+\eps}{n}\log{r})$-approximation for the matroid secretary problem for a matroid $\cM$ of rank $r$ on $n$ elements, with no information given in advance (the RO-AA-MU model). In particular, assuming that $\cM$ is a partition matroid matroid of rank 1, we obtain an $O(\log^{1+\eps}{n} / \eps)$ approximation for the classical secretary problem, with no information given in advance(the CU model).
\end{corollary}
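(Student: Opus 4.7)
The plan is to derive both parts of the corollary as immediate consequences of the preceding Theorem, by plugging in the right black-box approximation algorithm for the RO-AA-MN model.

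For the first part, I would simply instantiate the Theorem with $\alpha = O(\log r)$, using the thresholding algorithm of Babaioff, Immorlica and Kleinberg \cite{BIK07} as the black box. That algorithm takes as input a matroid $\cM$ of rank $r$ on $n$ elements (with $n$ known), observes a random-order stream of adversarially weighted elements, and returns an independent set whose expected weight is an $O(\log r)$ fraction of $w(\opt)$. Substituting $\alpha = O(\log r)$ into the bound $O(\frac{\alpha}{\eps} \log^{1+\eps} n)$ from the Theorem yields the claimed $O(\frac{1}{\eps} \log^{1+\eps} n \log r)$ guarantee in the RO-AA-MU model.

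For the second part, I would observe that the classical secretary problem is exactly the matroid secretary problem on the rank-$1$ uniform matroid $U_{1,n}$ (equivalently, a partition matroid with a single block of rank one). On such a matroid, the classical Dynkin--Lindley algorithm in the RO-AA-MN model selects the maximum weight element with probability $1/e$, so in particular it is an $\alpha$-approximation with $\alpha = e = O(1)$. Plugging $\alpha = O(1)$ into the Theorem yields the promised $O(\log^{1+\eps} n / \eps)$-approximation for the classical secretary problem in the CU model, which is the last claim of the corollary.

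The only slightly subtle point is to make sure that the black-box algorithm being invoked by the reduction in the Theorem is legitimately an RO-AA-MN algorithm: it is allowed to assume the value of $n$, but must work against adversarial weights and a random arrival order. Both the [BIK07] algorithm and the classical $1/e$-algorithm satisfy this, so no further work is needed. There is no real obstacle here---the corollary is a straightforward composition of the reduction in the preceding theorem with existing approximation guarantees.
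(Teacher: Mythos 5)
Your proposal is correct and follows essentially the same route the paper takes: the corollary is obtained directly by instantiating the preceding reduction theorem with the $O(\log r)$-approximation of Babaioff et al.\ as the RO-AA-MN black box. Your extra care in switching to the classical $1/e$-algorithm (so $\alpha = O(1)$) for the rank-$1$ case is a sensible way to sidestep the degeneracy of $\log r$ at $r=1$, and is consistent with the paper's intent.
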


We shall see in Section~\ref{sec:RA-CU-exp} that even in the case of $r=1$ (expectation-maximizing classical secretary problem) where
$n$ is chosen adversarially from $\{1,\ldots,N\}$, we cannot achieve a factor better than $O(\log N / \log \log N)$.

%
%
%

%
%
%
\section{Classical secretary with unknown $n$}
\label{sec:ra-cn}
In this section, we consider a variant of the classical secretary problem where we want to select exactly one element
(i.e. in matroid language, we consider a uniform matroid of rank $1$).
However, here we assume that the total number of elements $n$ (which is crucial in the classical $1/e$-competitive
algorithm) is not known in advance - it is chosen by an adversary who can effectively terminate the input at any point.
We consider the worst case, i.e. we want to achieve a certain probability of success
regardless of when the input is terminated. We show that there is no algorithm achieving a constant probability
of success in this case. However, we can achieve logarithmic guarantees and also prove closely matching lower bounds (see subsection \ref{subsec:classicalsecretary}).

In subsection \ref{sec:RA-CU-exp} we show that even if we want to maximize the expected weight of the selected element,
and $n$ is known to be upper bounded by $N$, still no algorithm can achieve a better than $\Omega(\log{N}/\log\log{N})$ approximation factor in expectation. Consequently, we obtain that no algorithm can achieve an approximation factor better than $\Omega(\log N/\log\log N)$ in the AO-RA-MN model.

\subsection{Known upper bound on $n$}
\label{subsec:classicalsecretary}
First, let us consider the following scenario: an upper bound $N$ is given such that the actual number of
elements on the input is guaranteed to be $n \in \{1,2,\ldots,N\}$.
The adversary can choose any $n$ in this range and we do not learn $n$ until we process the $n$-th element.
(e.g., we are interviewing candidates
for a position and we know that the total number of candidates is certainly not going to be more than $1000$. But, we might run out of candidates at any point.)
The goal is to select the highest-ranking element with a certain probability.
Assuming the {\em comparison model} (i.e., where only the relative ranks of elements are known to the algorithm),
we show that there is no algorithm achieving a constant probability
of success in this case.

\begin{theorem}
\label{thm:RA-CN}
Given that the number of elements is chosen by an adversary in $\{1,\ldots,N\}$
and $N$ is given in advance,
there is a randomized algorithm which selects the best element out of the first $n$
with probability at least $1/(H_{N-1}+1)$.

On the other hand, there is no algorithm in this setting which returns the best element with probability
more than $1/H_N$. Here, $H_N = \sum_{i=1}^{n} \frac{1}{i}$ is the $N$-th harmonic number.

\end{theorem}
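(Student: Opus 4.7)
The plan is to parameterize every comparison-based strategy by a vector $(r_1,\ldots,r_N) \in [0,1]^N$, where $r_i$ is the probability of accepting the $i$-th arriving element conditioned on it being a \emph{record} (the largest among the first $i$). In the comparison model, acting only on records is without loss of generality, since a non-record cannot be the overall best. Setting $a_i = r_i \prod_{j<i}(1 - r_j/j)$ and using the fact that, conditioned on the best of the first $n$ sitting at position $i$, the indicator events "position $j<i$ is a record" are independent with probability $1/j$, one gets the clean formula
\begin{equation*}
P_n \;=\; \frac{1}{n}\sum_{i=1}^{n} a_i,
\end{equation*}
where $P_n$ is the probability of selecting the best among the first $n$ elements. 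A direct induction also yields $\pi_i := \prod_{j \le i}(1 - r_j/j) = 1 - \sum_{j\le i} a_j/j$, so feasibility of $(r_i) \in [0,1]^N$ is equivalent to $a_i \ge 0$ together with $\sum_{i\le N} a_i/i \le 1$.

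For the algorithmic upper bound, I would force $a_i$ to be the constant $c$ for all $i$, which immediately yields $P_n = c$ for every $n$. The recursion $a_i = r_i \pi_{i-1}$ combined with $\pi_i = 1 - cH_i$ gives $r_i = c/(1 - cH_{i-1})$, and the only binding constraint is $r_N \le 1$, i.e. $c \le 1/(1+H_{N-1})$. Taking $c = 1/(1+H_{N-1})$ makes $r_N = 1$ exactly and proves the first half of the theorem, including an explicit algorithm.

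For the matching hardness, the plan is an averaging argument against a carefully chosen adversarial distribution over $n$. Consider weights $w_j = 1/(j+1)$ for $1 \le j \le N-1$ and $w_N = 1$. These satisfy $\sum_{j=1}^N w_j = H_N$ and, setting $A_i = \sum_{n \ge i} w_n/n$, one checks by telescoping that $iA_i = 1$ for every $i$. Rearranging,
\begin{equation*}
\sum_{n=1}^{N} w_n P_n \;=\; \sum_{i=1}^{N} a_i A_i \;=\; \sum_{i=1}^{N} \frac{a_i}{i} \cdot (iA_i) \;=\; \sum_{i=1}^{N} \frac{a_i}{i} \;\le\; 1.
\end{equation*}
Hence $\min_n P_n \cdot H_N \le \sum_n w_n P_n \le 1$, giving the impossibility of exceeding $1/H_N$.

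The main obstacle is spotting the right dual weights $w_j$; everything else is routine bookkeeping. The choice is guided by LP-duality intuition: one wants $iA_i$ to be the same constant for every $i$ (so that the constraint $\sum a_i/i \le 1$ is tight for the adversary's optimal response), and working this backwards uniquely determines $w_j = 1/(j+1)$ for $j<N$ with $w_N=1$. Once this distribution is identified, the two halves of the theorem become mirror images and the argument closes cleanly.
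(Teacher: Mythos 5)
Your proof is correct. For the algorithmic direction it is essentially the same argument as the paper's: your $a_i$ equal the paper's $i p_i$, your uniform choice $a_i \equiv c$ is the paper's feasible solution $p_i = \frac{1}{i(H_{N-1}+1)}$, and your binding constraint $r_N \leq 1$ is exactly the paper's constraint $\sum_{j<N} p_j + N p_N \leq 1$ becoming tight.

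For the impossibility direction your route is genuinely different. The paper relaxes the per-index constraints to the single constraint $\sum_i p_i \leq 1$ and then runs an exchange argument: a solution minimizing $\sum p_i$ subject to the coverage constraints must satisfy $i p_i \geq \alpha$ for all $i$, whence $1 \geq \alpha H_N$. You instead exhibit an explicit dual solution --- the adversarial distribution $w_j = 1/(j+1)$ for $j < N$ and $w_N = 1$, tuned so that the tails $A_i = \sum_{n \geq i} w_n/n$ satisfy $i A_i = 1$ --- and close with the single weighted average $\sum_n w_n P_n = \sum_i a_i/i \leq 1$. This makes the optimal adversary explicit and replaces the exchange step with a one-line computation; the cost is that one has to guess the right weights, which you motivate via complementary slackness.

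One slip worth correcting: you assert that feasibility of $(r_i) \in [0,1]^N$ is \emph{equivalent} to $a_i \geq 0$ and $\sum_{i \leq N} a_i/i \leq 1$. That condition is necessary (it says $\pi_N \geq 0$) but not sufficient; the correct characterization is $a_i \geq 0$ and $a_i \leq \pi_{i-1} = 1 - \sum_{j<i} a_j/j$ for every $i$, which is the per-index constraint (\ref{eq:p-sum}) in the paper's LP. Indeed, under your claimed equivalence $a_i \equiv 1/H_N$ would be feasible, yet it forces $r_N = \frac{1}{H_N - H_{N-1}} = N > 1$. The slip is harmless for you: in the hardness direction you only use the necessary implication, and in the algorithmic direction you check the genuine constraint $r_N \leq 1$ directly rather than the summed one. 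But the sentence as written is false and should be restated as a one-sided implication.
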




Our proof is based on the method of Buchbinder et al.~\cite{BJS10} which bounds the optimal achievable
probability by a linear program. In fact the optimum of the linear program is {\em exactly} the optimal
probability that can be achieved.

\begin{lemma}
\label{lem:RA-CN-LP}
Given the classical secretary problem where the number of elements is chosen by an adversary
from $\{1,2,\ldots,N\}$ and $N$ is known in advance, the best possible probability with which
an algorithm can find the optimal element is given by
\begin{eqnarray}
\nonumber
 \max & \alpha: \\
\label{eq:obj-bound}
 \forall n \leq N; & \frac{1}{n} \sum_{i=1}^{n} i p_i \geq \alpha, \\
\label{eq:p-sum}
 \forall i \leq N; & \sum_{j=1}^{i-1} p_j + i p_i \leq 1, \\
\nonumber
 \forall i \leq N;  & p_i \geq 0.
\end{eqnarray}
\end{lemma}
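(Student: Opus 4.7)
The plan is to establish a tight correspondence between randomized algorithms and feasible LP solutions, showing each algorithm induces feasible $(p_i,\alpha)$ with $\alpha$ equal to its guaranteed success probability, and conversely each feasible $(p_i,\alpha)$ is realized by some algorithm. This gives exact equality of the two optima.

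First I would apply the standard Dynkin-style reduction to restrict attention to algorithms that only ever pick a \emph{candidate}, i.e.\ an element strictly better than all previously seen. In the comparison model this is without loss of generality: picking a non-candidate is weakly dominated by waiting, since doing so reveals the same information but cannot decrease the chance of catching a later maximum. Such an algorithm is then parameterized by numbers $q_i \in [0,1]$, where $q_i$ is the probability of selecting the $i$-th arrival conditional on it being a candidate and no earlier element having been selected.

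Next I would let $p_i$ denote the unconditional probability that the algorithm picks at position $i$, and derive constraint~(\ref{eq:p-sum}) from the key independence fact: the event ``position $i$ is a candidate'' depends only on the rank of the $i$-th element relative to the first $i$ elements and, by uniformity of the random permutation, has probability $1/i$ independently of every event determined by the relative order of the first $i-1$ elements; in particular, independently of whether the algorithm has already picked by time $i$. Hence
\[
p_i \;=\; \Bigl(1-\sum_{j<i} p_j\Bigr)\cdot \frac{1}{i}\cdot q_i,
\]
and $q_i\le 1$ rearranges exactly to $\sum_{j<i} p_j + i p_i \leq 1$. For the objective, when the input has $n$ elements the overall-best success probability is $\sum_{i=1}^n p_i \cdot \Pr[i\text{ is the global max}\mid \text{algo picks }i]$. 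Conditioned on position $i$ being a candidate, the probability that it is also the maximum over all $n$ positions equals $i/n$, and this is independent of the algorithm's internal coin flips (which only depend on the first $i$ relative ranks). Summing gives success probability $\frac{1}{n}\sum_{i=1}^n i\, p_i$, so demanding this exceed $\alpha$ for every $n\le N$ yields (\ref{eq:obj-bound}).

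For the converse direction, given any feasible $(p_i,\alpha)$, I would set $q_i := i p_i / (1 - \sum_{j<i} p_j) \in [0,1]$ (with $0/0$ interpreted as $0$) and run the candidate-only algorithm defined by these $q_i$'s; by the identity above it realizes the prescribed $p_i$'s and therefore achieves success probability at least $\alpha$ for every $n \leq N$. The main obstacle, and the step requiring the most care, is justifying the independence between the candidate indicator at step $i$ and the algorithm's entire history up to step $i-1$ — this rests on the fact that both the history and the ``candidate at $j$'' events for $j<i$ are measurable with respect to the relative order of the first $i-1$ elements, while ``candidate at $i$'' is measurable with respect to the rank of the $i$-th element among the first $i$, and these two pieces of information are independent under a uniformly random permutation.
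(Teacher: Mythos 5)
Your proposal is correct and follows essentially the same route as the paper: both reduce to candidate-only algorithms, parameterize by the pick probabilities $p_i$ (your $q_i$ is just the conditional form the paper computes as $i p_i / (1 - \sum_{j<i} p_j)$), derive constraint~(\ref{eq:p-sum}) from the independence between the candidate indicator at step $i$ and the relative-order information available before step $i$, and obtain the objective and the converse construction exactly as in the Buchbinder--Jain--Singh LP characterization cited in the paper.
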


The only difference between this LP and the one in \cite{BJS10} is that we have multiple constraints
 (\ref{eq:obj-bound}) instead of what is the objective function in \cite{BJS10}.
We use essentially the same proof to argue that this LP captures {\em exactly}
the optimal probability of success $\alpha$ that an algorithm can achieve.
We give the proof for completeness; understanding the validity of this LP will be also useful for us later.

\begin{proof}
Consider any (randomized) algorithm which finds the best element with probability at least $\alpha$,
for every possible number of incoming elements $n \in \{1,\ldots,N\}$.
It is convenient to assume that the algorithm
never learns $n$ and possibly continues running beyond the first $n$ elements (in which case it has failed).
Let us define
$$ p_i = \P{\mbox{algorithm skips the first } i-1 \mbox{ candidates and chooses candidate }i}.$$
The probability here is over both the randomness on the input and the randomness of the algorithm itself.
Recall that the actual number of candidates $n$ is not known beforehand. All that the algorithm knows at time $i$
are the relative ranks of the first $i$ candidates, which are also independent of $n$.
So the probabilities $p_i$ cannot depend on $n$.

Note that these are probabilities of disjoint events, so we have $\sum_{i=1}^{n} p_i \leq 1$.
The LP actually contains stronger inequalities (\ref{eq:p-sum}).
The reason why these inequalities are valid is as follows: We can assume w.l.o.g.~that the algorithm
never selects an element which is not the best so far. (Any algorithm can be converted to this form
and perform at least as well.) The probability (over random permutations of the input) that
the $i$-th candidate is the best so far is $1/i$. Therefore,
$$ \P{\mbox{algorithm skips the first } i-1 \mbox{ and chooses  } i \mid
 \mbox{candidate }i \mbox{ is the best out of the first }i} = $$
$$ = \frac{\P{\mbox{algorithm skips the first } i-1 \mbox{ and chooses candidate } i}}
  {\P{\mbox{candidate }i \mbox{ is the best out of the first }i}}
 = i p_i.$$
On the other hand, the probability that the algorithm skips the first $i-1$ elements is
$1 - \sum_{j=1}^{i-1} p_j$. This event is independent of whether the $i$-th element is
the best among the first $i$,
because all the algorithm learns about the first $i-1$ elements are their relative ranks.
This proves the constraint (\ref{eq:p-sum}):
$$ 1 - \sum_{j=1}^{i-1} p_j = \P{\mbox{algorithm skips the first } i-1 
 \mid \mbox{candidate }i \mbox{ is the best among the first }i} \leq i p_i.$$

The probability that the $i$-th candidate is the actual best candidate among the first $n$ is $1/n$.
Conditioned on this event, candidate $i$ is also the best among the first $i$ candidates (and that is the only information
available to the algorithm at that moment), so the algorithm selects candidate $i$ with conditional probability exactly $i p_i$.
The total probability that the algorithm selects the best candidate out of the first $n$ elements is
$$ \P{\mbox{success}} =  \sum_{i=1}^{n} \P{\mbox{element }i\mbox{ is optimal} \ \& \
 \mbox{algorithm selects }i} = \sum_{i=1}^{n} \frac{1}{n} \cdot i p_i.$$
We assume that the algorithm achieves success probability $\alpha$ for any number of candidates
$n \in \{1,\ldots,N\}$ chosen by an adversary. This proves the constraint (\ref{eq:obj-bound}).

Conversely, given a feasible solution to this LP, an algorithm can proceed as follows (see \cite{BJS10}):
If it comes to the $i$-th element and this is the best element so far, take it with probability
$i p_i / (1 - \sum_{j=1}^{i-1} p_j)$ (which is at most $1$ by (\ref{eq:p-sum}).
It can be verified by induction that the probability
of skipping the first $i-1$ elements and finding that element $i$ is the best so far
is $(1 - \sum_{j=1}^{i} p_j) / i$, and hence the total probability of taking element $i$
is exactly $p_i$. Conditioned on element $i$ being the actual optimum (which happens
with probability $1/n$), we take it with probability $i p_i$.
By (\ref{eq:obj-bound}), the success probability is at least $\alpha$ for any input length $n$.
\end{proof}

For a given $N$, an algorithm can explicitly solve the LP given by Lemma~\ref{lem:RA-CN-LP}
and thus achieve the optimal probability.
Theorem~\ref{thm:RA-CN} can be proved by estimating the value of this LP.

\begin{proofof}{Theorem \ref{thm:RA-CN}}
First, we show a feasible solution with $\alpha = \frac{1}{H_{N-1}+1}$.
We define $p_i = \frac{1}{i (H_{N-1}+1)}$ for each $i=1,\ldots,N$. This induces an algorithm as described
above: if it comes to the $i$-the element and it is the best so far, we take it with probability
$$ \frac{i p_i}{1 - \sum_{j=1}^{i-1} p_j} = \frac{1}{H_{N-1} + 1 - H_{i-1}}.$$
By Lemma~\ref{lem:RA-CN-LP}, it is sufficient to verify that $(p_i,\alpha)$ is a feasible solution:
$$ \frac{1}{n} \sum_{i=1}^{n} i p_i = \frac{1}{H_{N-1}+1} = \alpha $$
implies (\ref{eq:obj-bound}), and
$$ i p _i + \sum_{j=1}^{i-1} p_j = \frac{1}{H_{N-1}+1} (1 + \sum_{j=1}^{i-1} \frac{1}{j})
 = \frac{1}{H_{N-1}+1} (1 + H_{i-1}) \leq 1 $$
implies (\ref{eq:p-sum}). This proves that there is an algorithm with probability of success $1/(H_{N-1}+1)$.

Conversely, we prove that for any feasible solution, we have $\alpha \leq 1/H_N$.
For this, we in fact consider a weaker LP:
\begin{eqnarray}
\nonumber
 \max & \alpha: \\
\label{eq:obj-bound2}
 \forall n \leq N; & \frac{1}{n} \sum_{i=1}^{n} i p_i \geq \alpha, \\
\label{eq:p-sum2}
 & \sum_{i=1}^{N} p_i \leq 1, \\
 \nonumber
 \forall i \leq N;  & p_i \geq 0.
\end{eqnarray}

Obviously, any feasible solution to (\ref{eq:obj-bound}-\ref{eq:p-sum}) is also feasible for
(\ref{eq:obj-bound2}-\ref{eq:p-sum2}).
Fixing $\alpha$, consider a feasible solution to (\ref{eq:obj-bound2}-\ref{eq:p-sum2})
which minimizes $\sum_{i=1}^{N} p_i$.
We claim that $i p_i \geq \alpha$ for each $i$. If not, take the first index $j$ such that $j p_j < \alpha$.
By (\ref{eq:obj-bound2}) for $n=j$, there must be a smaller index $j'<j$ such that $j' p_{j'} > \alpha$.
Then we can decrease $p_{j'}$ by $\delta/j'$ and increase $p_j$ by $\delta/j$ for some small $\delta>0$,
so that $j' p_{j'} + j p_j$ is preserved.
We can make sure that no inequality (\ref{eq:obj-bound2}) is violated, because the left-hand side
is preserved for all $n \geq j$, and the inequality was not tight for $j' \leq n < j$.
On the other hand, $\sum_{i=1}^{N} p_i$ decreases by $\delta/j' - \delta/j$. This is a contradiction.

Therefore, we have $p_i \geq \alpha/i$ for all $i$.
By summing up over all $i$ and using $\sum_{i=1}^{N} p_i \leq 1$,
we get
$$ 1 \geq \sum_{i=1}^{N} p_i \geq \alpha \sum_{i=1}^{N} \frac{1}{i} = \alpha H_N.$$
\end{proofof}

\subsection{Maximizing the expected weight}
\label{sec:RA-CU-exp}
A slightly different model arises when elements arrive with (random) weights and we want
to maximize the expected weight of the selected element. This model is somewhat easier for an algorithm;
any algorithm that selects the best element with probability at least $\alpha$ certainly achieves an $\alpha$-approximation
in this model, but not the other way around.
Given an upper bound $N$
on the number of elements (and under a more stringent assumption that weights are chosen i.i.d. from a known distribution), by a careful choice of a probability distribution for the weights, we prove that still no algorithm can achieve an approximation factor better than an $\Omega(\log N / \log \log N)$-approximation.

\begin{theorem}
\label{thm:max-expect-unknown-n}
For the classical secretary problem with random nonnegative weights drawn i.i.d.~from a known distribution and
the number of candidates chosen adversarially in the range $\{1,\ldots,N\}$, no algorithm achieves
a better than $ \frac{\log N}{32 \log \log N}$-approximation in expectation.
\end{theorem}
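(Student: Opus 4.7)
The plan is to use a single i.i.d.\ weight distribution $\mathcal{D}$ with $K = \lfloor \log N/\log\log N\rfloor$ ``value scales''. Let $L := \log N$, and let $\mathcal{D}$ put mass $p_i := (1-1/L)L^{-i}$ on the value $V_i := 2^i$ for $i = 0,\ldots,K-1$ (with the remainder on $0$). Define $n_j := \lceil 1/p_j\rceil$; since $L^K \le N$ with our choice of $K$, we have $n_{K-1} \leq N$. The adversary restricts $n$ to $\{n_0,n_1,\ldots,n_{K-1}\}$, and I would show that at least one of these horizons is bad for every algorithm.

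First I would check $\E{\opt(n_j)} = \Theta(V_j)$. The lower bound $\E{\opt(n_j)} \geq (1-1/e)V_j$ follows from $\Pr[\text{some }V_j\text{ appears in the first }n_j\text{ arrivals}] = 1-(1-p_j)^{n_j} \ge 1-1/e$, and $\E{\opt(n_j)} \leq \sum_i V_i\min(1,n_jp_i) = \sum_{i\leq j}2^i + V_j\sum_{k\geq 1}(2/L)^k \leq (2+4/L)V_j$. Next, parametrize any (possibly randomized) algorithm by $q_i^{(n)} := \Pr[\alg\text{ selects a value equal to }V_i\text{ within the first }n\text{ arrivals}]$, so $\E{\alg(n)} = \sum_i V_iq_i^{(n)}$. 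Three constraints are fundamental: $\sum_i q_i^{(n)}\leq 1$ (at most one selection), $q_i^{(n)} \leq np_i$ (union bound over occurrences of $V_i$), and monotonicity in $n$ with limit $q_i := \sup_n q_i^{(n)}$ satisfying $\sum_i q_i \leq 1$. In particular $q_i^{(n_j)} \leq \min(q_i, L^{j-i})$.

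If the algorithm achieves ratio $\alpha$ on every $n_j$, the bound $\E{\alg(n_j)} \geq \alpha(1-1/e)V_j$ rearranges to $\sum_i 2^{i-j} q_i^{(n_j)} \geq (1-1/e)\alpha$. Summing over $j=0,\ldots,K-1$ and swapping the order,
\[
(1-1/e)\,K\alpha \;\leq\; \sum_i \sum_{j=0}^{K-1} 2^{i-j}\min\!\bigl(q_i,\,L^{j-i}\bigr) \;\leq\; \sum_i \sum_{k\in\ZZ} \min\!\bigl(2^{-k} q_i,\,(L/2)^k\bigr)
\]
after the substitution $k=j-i$. The heart of the argument is the inner sum: splitting at the crossing point $k^\star := \log_L q_i$, the tail $\sum_{k\geq k^\star} 2^{-k}q_i$ equals $4q_i\cdot 2^{-k^\star} = 4q_i^{1-\beta}$ and the tail $\sum_{k<k^\star}(L/2)^k$ equals $(4/L)(L/2)^{k^\star} = (4/L)q_i^{1-\beta}$, where $\beta := \log 2/\log L$. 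Then by H\"older, $\sum_i q_i^{1-\beta} \leq K^\beta(\sum_i q_i)^{1-\beta} \leq K^\beta$. The crucial numerical point is that $\beta\log K = (\log 2/\log\log N)\log K \leq \log 2$, so $K^\beta \leq 2$ for our parameters. Hence the double sum is at most $10$, giving $\alpha \leq 10/\bigl((1-1/e)K\bigr) \leq 32\log\log N/\log N$, which is the theorem.

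The main obstacle is the inner-sum estimate, where one must combine both constraints $q_i^{(n_j)}\leq q_i$ and $q_i^{(n_j)}\leq L^{j-i}$ in their correct regimes (splitting at $k^\star$) to produce the right exponent $q_i^{1-\beta}$, and then tune the H\"older step so that $K^\beta$ stays bounded. The joint choice $L = \log N$ and $K = \log N/\log\log N$ is essentially forced: taking $L$ constant gives only $\alpha = O(1/\sqrt K)$, while taking $L$ too large shrinks $K$. Conceptually, the global ``budget'' $\sum_i q_i \leq 1$ is the single constraint coupling all $K$ horizons $n_j$ through a common algorithmic strategy, and this is exactly what lets us beat the trivial per-horizon bound.
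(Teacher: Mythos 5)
Your proof is correct and reaches the same bound, but it genuinely reparametrizes the argument. The paper's hard distribution assigns probability $2^{-j}$ to weight $2^{\gamma j}$ with $\gamma=1/\log\log N$; your distribution assigns probability $\approx L^{-i}$ to weight $2^i$ with $L=\log N$. Writing $i=\gamma j$ shows these are essentially the same distribution, just indexed by value rather than by the geometric index. The more substantive difference is the decomposition of the algorithm's behavior. The paper works with $p_i=\Pr[\text{algorithm selects in time block }B_i]$, so the coupling constraint $\sum_i p_i\le 1$ partitions the unit selection probability \emph{by time}, and needs a further inner decomposition $p_i=\sum_j p_{ij}$ plus the union bound $p_{ij}\le\min(2^{i-j},1)$ to relate time blocks to values. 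You instead work with $q_i=\Pr[\text{algorithm eventually selects value }V_i]$, so $\sum_i q_i\le 1$ partitions the probability \emph{by value}, and the monotonicity $q_i^{(n)}\!\uparrow q_i$ plays the role of the paper's block prefixes. This lets you use horizons $n_j\approx L^j$ (only $K=\log N/\log\log N$ of them, one per constraint that matters) rather than the paper's $\log N$ prefixes of dyadic blocks, and the two inequalities $q_i^{(n_j)}\le q_i$ and $q_i^{(n_j)}\le n_jp_i$ replace the paper's two-level LP structure. The endgame---splitting the geometric sum at the crossover exponent to get $q_i^{1-\beta}$ and then applying concavity/H\"older to $\sum q_i^{1-\beta}\le K^\beta$---is the same as the paper's $\sum p_i^{1-\gamma}\le(\log N)^{\gamma}$ step, and the tuning $K^\beta\le 2$ corresponds exactly to the paper's choice $\gamma=1/\log\log N$.

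Two small things to tighten if you write this up formally. First, $n_jp_i=\lceil 1/p_j\rceil p_i$ is bounded by $2L^{j-i}$, not $L^{j-i}$, because of the ceiling; this costs an extra factor $2^\beta\le 2$ in the inner-sum estimate. Second, the second tail $\sum_{k<k^\star}(L/2)^k$ can be as large as $\frac{L}{L-2}q_i^{1-\beta}$ when $k^\star$ is not an integer, not $(4/L)q_i^{1-\beta}$; the correct generic bound for the inner sum is $O(1)\cdot q_i^{1-\beta}$ with a constant around $4$ to $8$ rather than $4+4/L$. Neither issue endangers the final constant of $32$, but the displayed constants $4q_i^{1-\beta}$ and $(4/L)q_i^{1-\beta}$ are not quite right as stated.
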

The hard examples are constructed based on a particular exponentially distributed probability distribution. Similar constructions have been used in related contexts \cite{HKS07,FHKMS10}.
\begin{proof}
We define a probability distribution over weights as follows. For a parameter $\gamma \in (0,\frac13)$
(possibly depending on $N$), let the weight of each element be (independently)
\begin{itemize}
\item $w_j = 2^{\gamma j}$ with probability $1/2^j$, for each $j \geq 1$.
\end{itemize}
Note that although the weights are unbounded, the expected weight of each element is finite.

Consider blocks of elements where the $i$-th block $B_i$ has size $2^i$. The adversary will choose arbitrarily
a number of blocks $\ell \leq \log N$, and a stopping point $n = \sum_{i=1}^{\ell} 2^i = 2^{\ell+1}-1$.
Note that given $\ell$, the expected optimum is
$$ OPT_\ell \geq \sum_{j=1}^{\infty} w_j \P{w_j \mbox{ is the largest weight among } 2^{\ell} \mbox{ elements}}.$$
The probability that no weight larger than $w_j$ appears among $2^{\ell}$ elements is
$(1-1/2^j)^{2^{\ell}}$. So,
\begin{eqnarray*}
 \P{w_j \mbox{ is the largest weight among } 2^{\ell} \mbox{ elements}}
 & = & (1-1/2^j)^{2^{\ell}} - (1-1/2^{j-1})^{2^{\ell}} \\
 & = & \Theta(\min \{ 2^{\ell-j}, 1 \}).
\end{eqnarray*}
Therefore, since $w_j = 2^{\gamma j}$ and $\gamma \in (0,\frac13)$, the expected contribution from elements
of weight $w_j$ is roughly $2^{\gamma j} \min \{ 2^{\ell-j}, 1\}$, which is maximized for $j = \ell$.
(Note also that the distribution decays exponentially both for $j > \ell$ and $j < \ell$.)
So the largest contribution comes from elements of weight roughly $w_\ell$. We can estimate:
$$ OPT_\ell \geq w_\ell \P{w_\ell \mbox{ appears among }2^\ell \mbox{ elements}}
 = 2^{\gamma \ell} (1 - (1 - 1/2^\ell)^{2^\ell}) \geq (1-1/e) 2^{\gamma \ell}.$$

Now consider any algorithm (which does not know $\ell$ beforehand). Let $p_i$ denote the probability that
the algorithm skips the first $i-1$ blocks and then chooses some element in block $B_i$. Note that this event
might be correlated with the random weights that appear in blocks $B_1,\ldots,B_i$.
However, we have a bound on the probability that weight $w_j$ appears in block $B_i$:
$$ \P{w_j \mbox{ appears in block }B_i} = 1 - (1 - 1/2^j)^{2^i} \leq \min \{ 2^{i-j}, 1 \}.$$
Let $p_{ij}$ denote the probability that the algorithm gets an element of weight $w_j$ from block $B_i$.
By the above we have $p_{ij} \leq \min \{2^{i-j}, 1\}$. Also, by definition of the probabilities, $\sum_{j=1}^{\infty} p_{ij} = p_i$.
Given $p_{ij}$, the expected weight that the algorithm obtains from block $B_i$ is
$ \E{\mbox{profit from }B_i} = \sum_{j=1}^{\infty} w_j p_{ij} $
and the total profit over the first $\ell$ blocks is $\sum_{i=1}^{\ell} \sum_{j=1}^{\infty} w_j p_{ij}$.
Thus the expected profit of any algorithm can be bounded by the following LP.
\begin{eqnarray*}
\max & \alpha: \\
\forall \ell \leq \log N; &\sum_{i=1}^{\ell} \sum_{j=1}^{\infty} w_j p_{ij} \geq \alpha OPT_\ell; \\
\forall i,j;& p_{ij} \leq \min \{ 2^{i-j}, 1 \}; \\
\forall i;& \sum_{j=1}^{\infty} p_{ij} = p_i; \\
&\sum_{i=1}^{\ell} p_i \leq 1; \\
&p_i \geq 0.
\end{eqnarray*}

We estimate the value of this LP as follows.
Subject to the condition $\sum_{j=1}^{\infty} p_{ij} = p_i$, the quantity $\sum_{j=1}^{\infty} w_j p_{ij}$
will be maximized if we make $p_{ij}$ for large $j$ as large as possible.
However, note that $2^{\gamma j} p_{ij} \leq 2^{\gamma j} 2^{i-j}$, so the tail for $j \rightarrow \infty$
decays exponentially and we might as well
concentrate only on the first term. Assuming that $p_i = 2^{i-k}$,
the best choice is to set $p_{ij} = 0$ for $j \leq k$
and $p_{ij} = 2^{i-j}$ for all $j \geq k+1$, which gives
$$ \sum_{j=1}^{\infty} w_j p_{ij} \leq \sum_{j=k+1}^{\infty} 2^{\gamma j} 2^{i-j} \leq \frac{1}{1-2^{\gamma-1}} 2^{(\gamma-1)(k+1) + i}
 \leq 2 \cdot (2^{i-k})^{1-\gamma} 2^{\gamma i} $$
where we used $\gamma \in (0,\frac13)$. Note that for any value of $p_i$, we can apply this argument to the power of $2$ nearest to $p_i$;
hence,
$$ \sum_{j=1}^{\infty} w_j p_{ij} \leq 4 \cdot p_i^{1-\gamma} 2^{\gamma i}.$$
Now suppose the adversary stops the game after $\ell$ blocks. The expected optimum is
$OPT_\ell \geq (1-1/e) 2^{\gamma \ell}$
(see above), while the algorithm gets
$$ \sum_{i=1}^{\ell} \sum_{j=1}^{\infty} w_j p_{ij} \leq 4 \cdot \sum_{i=1}^{\ell} p_i^{1-\gamma} 2^{\gamma i}.$$
This should be at least $\alpha OPT_\ell \geq \alpha (1-1/e) 2^{\gamma \ell}$; therefore, we get
$$ \sum_{i=1}^{\ell} p_i^{1-\gamma} 2^{\gamma (i-\ell)} \geq \frac14 (1-1/e) \alpha \geq \frac18 \alpha.$$
We sum up these inequalities for $\ell=1,\ldots,\log N$:
$$ \sum_{\ell=1}^{\log N} \sum_{i=1}^{\ell} p_i^{1-\gamma} 2^{\gamma (i-\ell)}
 = \sum_{i=1}^{\log N} p_i^{1-\gamma} \sum_{\ell=i}^{\log N} 2^{\gamma (i-\ell)} \geq \frac18 \alpha \log N.$$
The sum $\sum_{\ell=i}^{\log N} 2^{\gamma (i-\ell)}$ is bounded by $\sum_{\ell=i}^{\infty} 2^{\gamma (i-\ell)} = \frac{1}{1-2^{-\gamma}}
 \leq \frac{2}{\gamma}$. Therefore, we get
$$ \alpha \leq \frac{16}{\gamma \log N} \sum_{i=1}^{\log N} p_i^{1-\gamma}.$$
Given that $\sum_{i=1}^{\log N} p_i = 1$ and the function $x^{1-\gamma}$ is concave,
the best value of $\alpha$ can be achieved if we set $p_i = 1/\log N$ for all $i$. Then, we have
$$ \alpha \leq \frac{16}{\gamma (\log N)^{1-\gamma}}.$$
Finally, we set $\gamma = 1 / \log \log N$ which gives
$$ \alpha \leq \frac{32 \log \log N}{\log N}.$$
\end{proof}

Consequently, we obtain that no algorithm can achieve an approximation factor better than $\Omega(\log N/\log\log N)$ in the AO-RA-MN model.

\begin{corollary}
\label{cor:hardness_aoramn}
For the matroid secretary problem in the AO-RA-MN (and AO-RA-MU, RO-AA-MU) models,
no algorithm can achieve a better than $\Omega(\frac{\log{N}}{\log\log{N}})$-approximation in expectation.
\end{corollary}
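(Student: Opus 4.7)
The plan is to obtain all three hardness bounds by reducing to the classical (unknown-$n$) secretary problem with expected-weight objective, for which Theorem~\ref{thm:max-expect-unknown-n} already gives the desired $\Omega(\log N / \log \log N)$ lower bound. The reduction is essentially to embed a classical-secretary instance into a matroid where the ``tail'' of the input beyond the true stopping point consists of elements the algorithm cannot use, so that any matroid algorithm simulates (but does no better than) an algorithm for the classical problem without knowing when to stop.

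Concretely, for the AO-RA-MN case I would take the matroid on $N$ elements in which $\{e_1,\ldots,e_n\}$ forms a uniform rank-$1$ matroid and $\{e_{n+1},\ldots,e_N\}$ are loops, presented in the adversarial order $e_1,\ldots,e_N$. Any matroid-secretary algorithm is then effectively a classical secretary algorithm on the first $n$ elements: its decisions on $e_1,\ldots,e_n$ are irrevocable and made before any loop is encountered, so learning about loops afterwards is useless, and the matroid optimum equals the maximum weight among $e_1,\ldots,e_n$ (the classical optimum). The adversary's weight multiset is obtained by Yao's principle: draw $N$ i.i.d.\ weights from the hard distribution of Theorem~\ref{thm:max-expect-unknown-n} and pick a specific realization $W^*$ on which the expected ratio is still bounded by that theorem. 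Since a uniformly random bijection applied to i.i.d.\ weights is distributionally identical to the original i.i.d.\ sample, the RA model faithfully reproduces the classical i.i.d.\ setup, and the adversarial choice of $n$ matches the adversarial termination time.

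The two remaining cases require only minor adjustments: AO-RA-MU provides strictly less information than AO-RA-MN, so the same lower bound carries over. For RO-AA-MU I would use the uniform rank-$1$ matroid on $n$ non-loop elements (with $n$ adversarial and unknown to the algorithm); the random arrival order plays the role of the uniform permutation of the fixed weights, and Yao's principle once more extracts a hard adversarial multiset from the i.i.d.\ distribution of Theorem~\ref{thm:max-expect-unknown-n}. The main obstacle, and essentially the only delicate step, is this Yao-style bridge between the i.i.d.\ weight distribution of Theorem~\ref{thm:max-expect-unknown-n} and the fixed-multiset semantics of the RA and AA models; exchangeability of i.i.d.\ samples makes the observations of the algorithm distributionally identical, so the hardness ratio transfers cleanly to all three matroid variants.
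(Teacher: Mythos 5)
Your proposal is correct and follows essentially the same route as the paper: embed a rank-$1$ partition matroid with $n$ non-loop elements followed by loops, so any matroid algorithm reduces to a classical secretary algorithm facing an adversarially chosen $n\in\{1,\ldots,N\}$, and then invoke Theorem~\ref{thm:max-expect-unknown-n}. You make explicit (via a Yao/averaging step and exchangeability) the bridge between the i.i.d.\ weight model of Theorem~\ref{thm:max-expect-unknown-n} and the fixed-multiset semantics of the RA and AA models, a point the paper leaves implicit; this is a reasonable amount of extra care and does not change the underlying argument.
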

\begin{proof}
It is not hard to convert the example of Theorem \ref{thm:max-expect-unknown-n} into a hard example for the AO-RA-MN model. It suffices to let $\cM$ to be a partition matroid of rank 1, and let the first $n$ elements of the inputs to be non-loop while the rest of the input contains only loops. Since the algorithm does not know $n$ in advance (it only knows $n\leq N$), it essentially has to choose one of the first $n$ elements without knowing $n$, which is a secretary problem where the number of candidates is chosen adversarially in the range $\{1,\ldots,N\}$. Therefore, no algorithm can achieve an approximation factor better than $\Omega(\log{N}/\log\log{N})$ (the same is also true for the AO-RA-MU, RO-AA-MU model,
where nothing is known about $n$ in advance).
\end{proof}

\section{Conclusion and open questions}

We presented a number of positive and negative results for variants of the matroid secretary problem.
The main open question is if there is a constant-factor approximation in the RO-AA-MN model, where weights are
assigned to elements adversarially and the input ordering of elements is random. An easier question might be whether
this is possible in the RO-RA-MN model where both the input order and weight assignment are random,
but only the total number of elements $n$ is known in advance (as opposed to the full matroid structure, as in \cite{Soto11}).
Note that under an adversarial assignment of weights, knowing the matroid beforehand (RO-AA-MK) does not seem to be
easier than the RO-AA-MN model; the true input could be embedded in a much larger matroid with most weights set to zero.
A similar question arises for the AO-RA-MN model: whether it is possible to improve the $O(\log n\log r)$ factor, thus closing the gap with the lower-bound of $\Omega(\log n/\log\log n)$.

\end{document}